\DeclareMathOperator*{\argmax}{arg\,max}
\DeclareMathOperator*{\argmin}{arg\,min}
\newtheorem{definition}{Definition}
\newtheorem{theorem}{Theorem}
\newtheorem{lemma}{Lemma}
\title{Streaming Algorithms for Diversity Maximization with Fairness Constraints}
\begin{document}

\author{
  \IEEEauthorblockN{Yanhao Wang\IEEEauthorrefmark{1}, Francesco Fabbri\IEEEauthorrefmark{2}\textsuperscript{,}\IEEEauthorrefmark{4}, Michael Mathioudakis\IEEEauthorrefmark{3}}
  \IEEEauthorblockA{\IEEEauthorrefmark{1}\textit{School of Data Science and Engineering, East China Normal University, Shanghai, China} \\
  \IEEEauthorrefmark{2}\textit{Department of Information and Communication Technologies, Universitat Pompeu Fabra, Barcelona, Spain} \\
  \IEEEauthorrefmark{4}\textit{Eurecat -- Centre Tecnol\`{o}gic de Catalunya, Barcelona, Spain} \\
  \IEEEauthorrefmark{3}\textit{Department of Computer Science, University of Helsinki, Helsinki, Finland} \\
  \IEEEauthorrefmark{1}yhwang@dase.ecnu.edu.cn\quad\IEEEauthorrefmark{2}francesco.fabbri@upf.edu\quad\IEEEauthorrefmark{3}michael.mathioudakis@helsinki.fi}
}

\maketitle

\begin{abstract}
Diversity maximization is a fundamental problem with wide applications in data summarization, web search, and recommender systems. Given a set $X$ of $n$ elements, it asks to select a subset $S$ of $k \ll n$ elements with maximum \emph{diversity}, as quantified by the dissimilarities among the elements in $S$. In this paper, we focus on the diversity maximization problem with fairness constraints in the streaming setting. Specifically, we consider the max-min diversity objective, which selects a subset $S$ that maximizes the minimum distance (dissimilarity) between any pair of distinct elements within it. Assuming that the set $X$ is partitioned into $m$ disjoint groups by some sensitive attribute, e.g., sex or race, ensuring \emph{fairness} requires that the selected subset $S$ contains $k_i$ elements from each group $i \in [1,m]$. A streaming algorithm should process $X$ sequentially in one pass and return a subset with maximum \emph{diversity} while guaranteeing the fairness constraint. Although diversity maximization has been extensively studied, the only known algorithms that can work with the max-min diversity objective and fairness constraints are very inefficient for data streams. Since diversity maximization is NP-hard in general, we propose two approximation algorithms for fair diversity maximization in data streams, the first of which is $\frac{1-\varepsilon}{4}$-approximate and specific for $m=2$, where $\varepsilon \in (0,1)$, and the second of which achieves a $\frac{1-\varepsilon}{3m+2}$-approximation for an arbitrary $m$. Experimental results on real-world and synthetic datasets show that both algorithms provide solutions of comparable quality to the state-of-the-art algorithms while running several orders of magnitude faster in the streaming setting.
\end{abstract}

\begin{IEEEkeywords}
algorithmic fairness, diversity maximization, max-min dispersion, streaming algorithm
\end{IEEEkeywords}

\section{Introduction}
\label{sec:intro}

\emph{Data summarization} is a common approach to tackling the challenges of big data volume in data-intensive applications. That is because, rather than performing high-complexity analysis tasks on the entire dataset, it is often beneficial to perform them on a representative and significantly smaller summary of the dataset, thus reducing the processing costs in terms of both running time and space usage. Typical examples of data summarization techniques~\cite{DBLP:journals/kais/Ahmed19} include sampling, sketching, histograms, coresets, and submodular maximization.

In this paper, we focus on \emph{diversity-aware data summarization}, which finds application in a wide range of real-world problems. For example, in database query processing~\cite{DBLP:journals/pvldb/QinYC12,DBLP:journals/kais/ZhengWQLG17}, web search~\cite{DBLP:conf/www/GollapudiS09,DBLP:conf/www/RafieiBS10}, and recommender systems~\cite{DBLP:journals/kbs/KunaverP17}, the output might be too large to be presented to the user entirely, even after filtering the results by relevance. One feasible solution is to present the user with a small but diverse subset that is easy for the user to process and representative of the complete results. As another example, when training machine learning models on massive data, feature and subset selection is a standard method to improve the efficiency. As indicated by several studies~\cite{DBLP:conf/aaai/ZadehGMZ17,DBLP:conf/icml/CelisKS0KV18}, selecting diverse features or subsets can lead to better balance between efficiency and accuracy. A key technical problem in both applications is \emph{diversity maximization}~\cite{DBLP:conf/pods/BorodinLY12,DBLP:journals/pvldb/DrosouP12,DBLP:conf/kdd/AbbassiMT13,DBLP:conf/pods/IndykMMM14,DBLP:conf/wsdm/CeccarelloPP18,DBLP:conf/pods/BorassiELVZ19,DBLP:conf/sdm/BauckhageSW20,DBLP:journals/pvldb/CeccarelloPPU17,moumoulidou_et_al:LIPIcs.ICDT.2021.13,DBLP:journals/tkde/DrosouP14,DBLP:conf/sdm/ZhangG20}.

In more detail, for a given set $X$ of elements in some metric space and a size constraint $k$, the diversity maximization problem asks for a subset of $k$ elements with maximum \emph{diversity}. Formally, diversity is quantified by a function that captures how well the selected subset spans the range of elements in $X$, and is typically defined in terms of distances or dissimilarities among elements in the subset. Prior studies~\cite{DBLP:conf/pods/IndykMMM14,DBLP:journals/kais/ZhengWQLG17,DBLP:conf/www/GollapudiS09,DBLP:journals/kbs/KunaverP17} have suggested many different diversity objectives of this kind. Two of the most popular ones are \emph{max-sum dispersion}, which aims to maximize the sum of distances between all pairs of elements in the selected subset $S$, and \emph{max-min dispersion}, which aims to maximize the minimum distance between any pair of distinct elements in $S$. Fig.~\ref{fig:compare} illustrates how to select 10 most diverse points from a point set in 2D with each of the two diversity objectives. As shown in Fig.~\ref{fig:compare}, the max-sum dispersion tends to select ``marginal'' elements and may include highly similar elements in the solution, which is not suitable for the applications requiring more uniform coverage. Therefore, we focus on diversity maximization based on the max-min dispersion problem in this paper.

\begin{figure}[t]
  \centering
  \subfigure[Max-sum dispersion]{
    \label{subfig:max:sum}
    \includegraphics[width=0.18\textwidth]{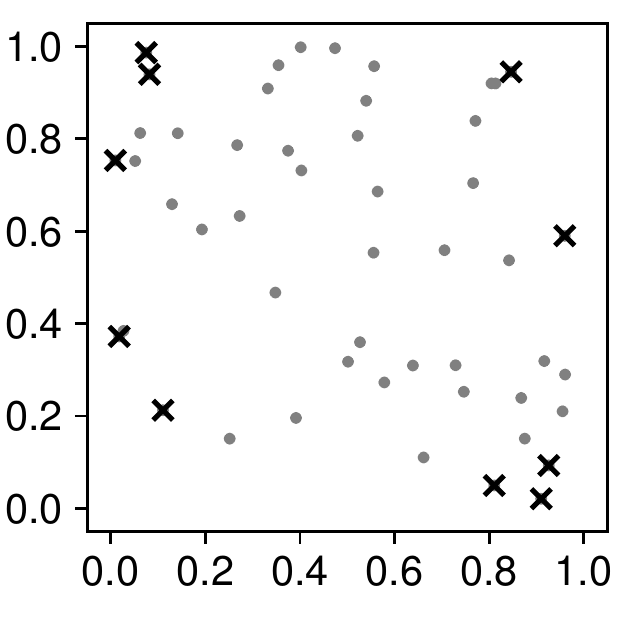}
  }
  \subfigure[Max-min dispersion]{
    \label{subfig:max:min}
    \includegraphics[width=0.18\textwidth]{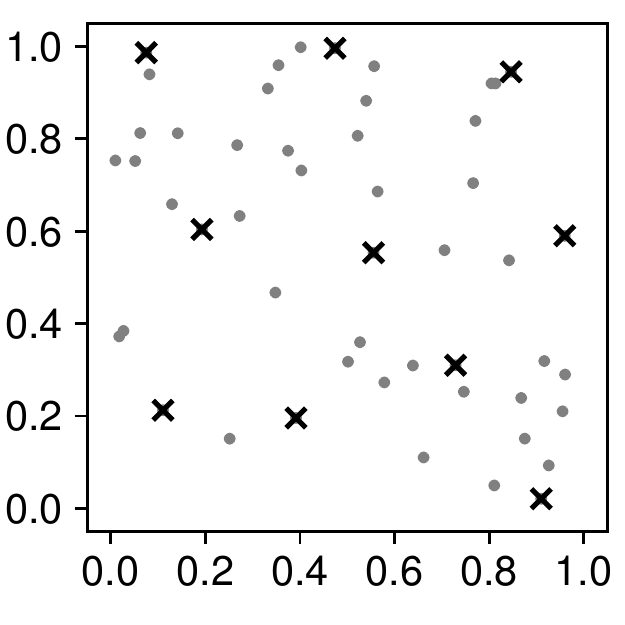}
  }
  \caption{Comparison of max-sum and max-min dispersion diversity objectives.}
  \label{fig:compare}
\end{figure}

In addition to \emph{diversity}, \emph{fairness} in data summarization is also attracting increasing attention~\cite{DBLP:conf/icml/CelisKS0KV18,DBLP:conf/icml/ChiplunkarKR20,DBLP:conf/icml/JonesNN20,DBLP:conf/icml/KleindessnerAM19,DBLP:conf/waoa/0001SS19,DBLP:conf/nips/HuangJV19,NEURIPS2020_9d752cb0,DBLP:conf/www/0001FM21}. Several studies reveal that the biases w.r.t.~sensitive attributes, e.g., sex, race, or age, in underlying datasets can be retained in the summaries and could lead to unfairness in data-driven socio-computational systems such as education, recruitment, and banking~\cite{DBLP:conf/icml/CelisKS0KV18,DBLP:conf/icml/KleindessnerAM19,NEURIPS2020_9d752cb0}. One of the most common notions for fairness in data summarization is \emph{group fairness}~\cite{DBLP:conf/icml/ChiplunkarKR20,DBLP:conf/icml/JonesNN20,DBLP:conf/icml/KleindessnerAM19,DBLP:conf/icml/CelisKS0KV18,DBLP:conf/www/0001FM21}, which partitions the dataset into $m$ disjoint groups based on some sensitive attribute and introduces a \emph{fairness constraint} that limits the number of elements from group $i$ in the summary to $k_i$ for every group $i \in [1,m]$. However, most existing methods for diversity maximization cannot be adapted directly to satisfy such fairness constraints. Moreover, a few methods that can deal with fairness constraints are specific for the max-sum dispersion problem~\cite{DBLP:conf/kdd/AbbassiMT13,DBLP:conf/pods/BorodinLY12,DBLP:conf/wsdm/CeccarelloPP18}. To the best of our knowledge, the methods in~\cite{moumoulidou_et_al:LIPIcs.ICDT.2021.13} are the only ones for max-min diversity maximization with fairness constraints.

Furthermore, since the applications of diversity maximization are mostly in the realm of massive data analysis, it is important to design efficient algorithms for processing large-scale datasets. The streaming model is a well-recognized framework for big data processing. In the streaming model, an algorithm is only permitted to process each element in the dataset sequentially in one pass, is allowed to take time and space that are sublinear to or even independent of the dataset size, and is required to provide solutions of nearly equal quality to those returned by offline algorithms. However, the only known algorithms~\cite{moumoulidou_et_al:LIPIcs.ICDT.2021.13} for fair max-min diversity maximization are designed for the offline setting and are thus very inefficient in data streams.

\begin{figure}[t]
\centering
\subfigure[Unconstrained solution]{
  \label{subfig:unconstrained:dm}
  \includegraphics[width=0.18\textwidth]{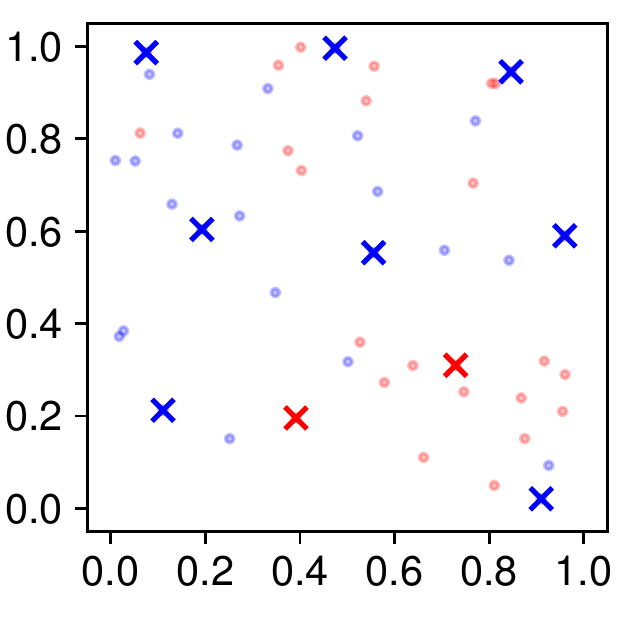}
}
\subfigure[Fair solution]{
  \label{subfig:fair:dm}
  \includegraphics[width=0.18\textwidth]{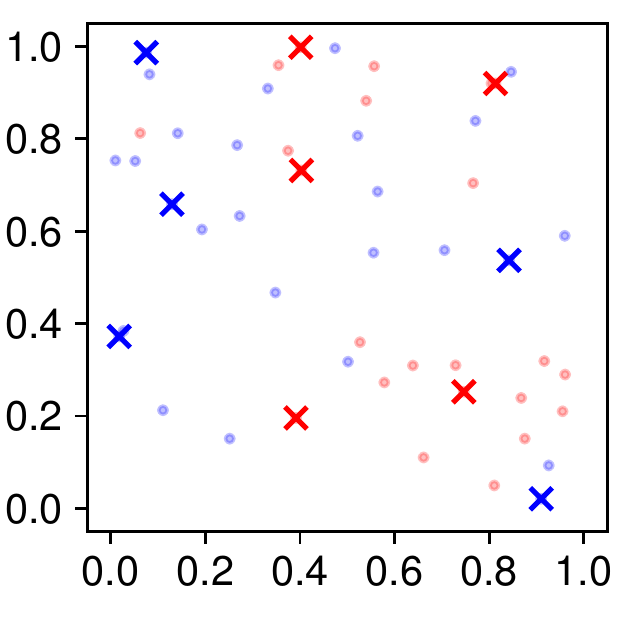}
}
\caption{Example of unconstrained vs.~fair diversity maximization. We consider a set of individuals each of whom is described by two attributes such as \emph{income} and \emph{capital gain}. They are partitioned into two disjoint groups by \emph{gender}, as colored in red and blue, respectively. The fair diversity maximization problem returns a subset of size $10$ that maximizes diversity in terms of attributes and contains an equal number (i.e., $k_i = 5$) of elements from both groups.}
\label{fig:fair:dm}
\end{figure}

\vspace{1mm}
\noindent\textbf{Our Contributions:}
In this paper we propose novel streaming algorithms for the fair diversity maximization (FDM) problem with the max-min objective.
Our main contributions are summarized as follows.
\begin{itemize}[leftmargin=*]
  \item We formally define the fair max-min diversity maximization problem in metric spaces. Then, we describe the streaming algorithm for unconstrained max-min diversity maximization in~\cite{DBLP:conf/pods/BorassiELVZ19} and improve its approximation ratio from $\frac{1-\varepsilon}{5}$ to $\frac{1-\varepsilon}{2}$ for any parameter $\varepsilon\in(0,1)$.
  \item We first propose a $\frac{1-\varepsilon}{4}$-approximation streaming algorithm called \textsf{SFDM1} for FDM when there are only two groups in the dataset. During stream processing, it maintains group-blind solutions and group-specific solutions for both groups using the streaming algorithm of~\cite{DBLP:conf/pods/BorassiELVZ19}. In the post-processing, each group-blind solution is balanced for the fairness constraint by swapping elements with group-specific solutions. It takes $O(\frac{k\log\Delta}{\varepsilon})$ time per element for streaming processing, where $\Delta$ is the ratio of the maximum and minimum distances between any pair of distinct elements in the dataset, spends $O(\frac{k^2\log\Delta}{\varepsilon})$ time for post-processing, and stores $O(\frac{k\log\Delta}{\varepsilon})$ elements in memory.
  \item We further propose a $\frac{1-\varepsilon}{3m+2}$-approximation streaming algorithm called \textsf{SFDM2} for FDM with an arbitrary number $m$ of groups in the dataset. It uses a similar method for stream processing to \textsf{SFDM1}. In the post-processing, it first partitions the elements in all solutions into clusters based on their pairwise distances. Starting from a partial solution chosen from the group-blind solution, it utilizes an algorithm to find a maximum-cardinality intersection of two matroids, the first of which is defined by the fairness constraint and the second of which is defined on the clusters, to augment the partial solution to acquire the final fair solution. It takes $O(\frac{k\log\Delta}{\varepsilon})$ time per element for streaming processing, requires $ O\big(\frac{k^2m\log{\Delta}}{\varepsilon} \cdot (m+\log^{2}{k})\big) $ time for post-processing, and stores $O(\frac{km\log\Delta}{\varepsilon})$ elements in memory.
  \item Finally, we evaluate the performance of our proposed algorithms against the state-of-the-art algorithms on several real-world and synthetic datasets. The results demonstrate that our proposed algorithms provide solutions of comparable quality to those returned by the state-of-the-art algorithms while running several orders of magnitude faster than them in the streaming setting.
\end{itemize}

The rest of this paper is organized as follows. The related work is discussed in Section~\ref{sec:related:work}. In Section~\ref{sec:def}, we introduce the basic concepts and definitions and describe the streaming algorithm for unconstrained diversity maximization. In Section~\ref{sec:alg}, we present our streaming algorithms for fair diversity maximization. Our experimental setup and results are reported in Section~\ref{sec:exp}. Finally, we conclude this paper in Section~\ref{sec:conclusion}.

\section{Related Work}
\label{sec:related:work}

Diversity maximization has been extensively studied over the last two decades. Existing studies mostly focus on two popular diversity objectives -- i.e., \emph{max-sum dispersion}~\cite{DBLP:journals/ior/RaviRT94,DBLP:journals/orl/HassinRT97,DBLP:conf/kdd/AbbassiMT13,DBLP:conf/pods/IndykMMM14,DBLP:conf/compgeom/CevallosEZ16,DBLP:conf/soda/CevallosEZ17,DBLP:conf/wsdm/CeccarelloPP18,DBLP:conf/pods/BorassiELVZ19,DBLP:conf/sdm/BauckhageSW20,DBLP:conf/cccg/AghamolaeiFZ15,DBLP:journals/pvldb/CeccarelloPPU17} and \emph{max-min dispersion}~\cite{DBLP:journals/ior/RaviRT94,DBLP:journals/orl/HassinRT97,DBLP:conf/pods/IndykMMM14,DBLP:conf/pods/BorassiELVZ19,moumoulidou_et_al:LIPIcs.ICDT.2021.13,DBLP:journals/tkde/DrosouP14,DBLP:conf/cccg/AghamolaeiFZ15,DBLP:journals/pvldb/CeccarelloPPU17}, as well as their variants~\cite{DBLP:journals/jal/ChandraH01,DBLP:conf/pods/IndykMMM14}.

An early study~\cite{ERKUT199048} proved that the max-sum and max-min diversity maximization problems are NP-hard even in metric spaces. The classic approaches to both problems are the greedy algorithms~\cite{DBLP:journals/ior/RaviRT94,DBLP:journals/orl/HassinRT97}, which achieves the best possible approximation ratio of $\frac{1}{2}$ unless P=NP. Indyk et al.~\cite{DBLP:conf/pods/IndykMMM14} proposed composable coreset-based approximation algorithms for diversity maximization. Aghamolaei et al.~\cite{DBLP:conf/cccg/AghamolaeiFZ15} improved the approximation ratios in~\cite{DBLP:conf/pods/IndykMMM14}. Ceccarello et al.~\cite{DBLP:journals/pvldb/CeccarelloPPU17} proposed coreset-based approximation algorithms for diversity maximization in MapReduce and streaming settings where the metric space has a bounded doubling dimension. Borassi et al.~\cite{DBLP:conf/pods/BorassiELVZ19} proposed sliding-window streaming algorithms for diversity maximization. Drosou and Pitoura~\cite{DBLP:journals/tkde/DrosouP14} studied max-min diversity maximization on dynamic data. They proposed a $\frac{b-1}{2b^2}$-approximation algorithm using a cover tree of base $b$. Bauckhage et al.~\cite{DBLP:conf/sdm/BauckhageSW20} proposed an adiabatic quantum computing solution for max-sum diversification. Zhang and Gionis~\cite{DBLP:conf/sdm/ZhangG20} extended diversity maximization to clustered data. Nevertheless, all the above methods only consider diversity maximization problems without fairness constraints.

There have been several studies on diversity maximization under matroid constraints, of which the fairness constraints are special cases. Abbassi et al.~\cite{DBLP:conf/kdd/AbbassiMT13} proposed a $(\frac{1}{2}-\varepsilon)$-approximation local search algorithm for max-sum diversification under matroid constraints. Borodin et al.~\cite{DBLP:conf/pods/BorodinLY12} proposed a $(\frac{1}{2}-\varepsilon)$-approximation algorithm for maximizing the sum of a submodular function and a max-sum dispersion function. Cevallos et al.~\cite{DBLP:conf/soda/CevallosEZ17} extended the local search algorithm for distances of negative type. They also proposed a PTAS for this problem via convex programming~\cite{DBLP:conf/compgeom/CevallosEZ16}. Bhaskara et al.~\cite{DBLP:conf/nips/BhaskaraGMS16} proposed a $\frac{1}{8}$-approximation algorithm for sum-min diversity maximization using linear relaxations. Ceccarello et al.~\cite{DBLP:conf/wsdm/CeccarelloPP18} proposed a coreset-based approach to matroid-constrained max-sum diversification in metric spaces of bounded doubling dimension. Nevertheless, the above methods are still not applicable to the max-min dispersion objective. The only known algorithms for fair max-min diversity maximization in~\cite{moumoulidou_et_al:LIPIcs.ICDT.2021.13} are offline algorithms and inefficient for data streams. We will compare our proposed algorithms with them both theoretically and experimentally in this paper. To the best of our knowledge, there has not been any previous steaming algorithm for fair max-min diversity maximization.

In addition to diversity maximization, \emph{fairness} has also been considered in many other data summarization problems, such as $k$-center~\cite{DBLP:conf/icml/ChiplunkarKR20,DBLP:conf/icml/JonesNN20,DBLP:conf/icml/KleindessnerAM19}, determinantal point processes~\cite{DBLP:conf/icml/CelisKS0KV18}, coresets for $k$-means clustering~\cite{DBLP:conf/waoa/0001SS19,DBLP:conf/nips/HuangJV19}, and submodular maximization~\cite{NEURIPS2020_9d752cb0,DBLP:conf/www/0001FM21}. However, since their optimization objectives are different from diversity maximization, the proposed methods cannot be directly used for our problem.

\section{Preliminaries}
\label{sec:def}

\subsection{Fair Diversity Maximization}
\label{subsec:def}

Let $X$ be a set of $n$ elements from a metric space with distance function $d(\cdot,\cdot)$ capturing the dissimilarities among elements. Recall that $d(\cdot,\cdot)$ is \emph{nonnegative}, \emph{symmetric}, and satisfies the \emph{triangle inequality} -- i.e., $ d(x, y) + d(y, z) \geq d(x, z) $ for any $x, y, z \in X$. Note that all the algorithms and analyses in this paper are general for any distance metric. We further generalize the notion of \emph{distance} to an element $x$ and a set $S$ as the distance between $x$ and its nearest neighbor in $S$ -- i.e., $ d(x,S) = \min_{y \in S} d(x,y) $.

Our focus in this paper is to find a small subset of \emph{most diverse} elements from $X$. Given a subset $S \subseteq X$, its diversity $div(S)$ is defined as the minimum of the pairwise distances between any two distinct elements in $S$ -- i.e., $div(S) = \min_{x,y \in S, x \neq y} d(x,y)$. The unconstrained version of diversity maximization (DM) asks for a subset $S \subseteq X$ of $k$ elements maximizing $div(S)$ -- i.e., $S^* = \argmax_{S \subseteq X : |S| = k} div(S)$. We use $ \mathtt{OPT}=div(S^*)$ to denote the diversity of the optimal solution $S^*$ for DM. This problem has been proven to be NP-complete~\cite{ERKUT199048}, and no polynomial-time algorithm can achieve an approximation factor of better than $\frac{1}{2}$ unless P=NP. One approach to DM is the $\frac{1}{2}$-approximation greedy algorithm~\cite{DBLP:journals/tcs/Gonzalez85,DBLP:journals/ior/RaviRT94} (known as \textsf{GMM}) in the offline setting.

We introduce \emph{fairness} to diversity maximization in case that $X$ is comprised of demographic groups defined on some sensitive attribute, e.g., sex or race. Formally, suppose that $X$ is divided into $m$ disjoint groups $[1,\dots,m]$ ($[m]$ for short) and a function $c : X \rightarrow [m]$ maps each element $x \in X$ to the group it belongs to. Let $ X_i = \{ x \in X : c(x)=i \} $ be the subset of elements from group $i$ in $X$. Obviously, we have $ \bigcup_{i=1}^{m} X_i = X $ and $ X_i \cap X_j = \emptyset $ for any $ i \neq j $. The fairness constraint assigns a positive integer $k_i$ to each of the $m$ groups and restricts the number of elements from group $i$ in the solution to $k_i$. We assume that $ \sum_{i=1}^{m} k_i = k $. The fair diversity maximization (FDM) problem is defined as follows.
\begin{definition}[Fair Diversity Maximization]
  Given a set $X$ of $n$ elements with $m$ disjoint groups $X_1,\ldots,X_m$ and $m$ size constraints $k_1,\ldots,k_m \in \mathbb{Z}^{+}$, find a subset $S$ that contains $k_i$ elements from $X_i$ and maximizes $div(S)$ -- i.e.,
  \begin{math}
    S^*_{f} = \argmax_{S \subseteq X \, : \, |S \cap X_i| = k_i, \forall i \in [m]} div(S)
  \end{math}.
\end{definition}
We use $ \mathtt{OPT}_f = div(S^*_{f}) $ to denote the diversity of the optimal solution $S^*_{f}$ for FDM. Since DM is a special case of FDM when $m=1$, FDM is NP-hard up to a $\frac{1}{2}$-approximation as well. In addition, our FDM problem is closely related to the concept of \emph{matroid}~\cite{Korte2012} in combinatorics. Given a ground set $V$, a matroid is a pair $\mathcal{M}=(V,\mathcal{I})$ where $\mathcal{I}$ is a family of subsets of $V$ (called the independent sets) with the following properties: (1) $\emptyset \in \mathcal{I}$; (2) for each $A \subseteq B \subseteq V$, if $B \in \mathcal{I}$ then $A \in \mathcal{I}$ (\emph{hereditary property}); and (3) if $A \in \mathcal{I}$, $B \in \mathcal{I}$, and $|A| > |B|$, then there exists $x \in A \setminus B$ such that $B \cup \{x\} \in \mathcal{I}$ (\emph{augmentation property}). An independent set is maximal if it is not a proper subset of any other independent set. A basic property of $\mathcal{M}$ is that its all maximal independent sets have the same size, which is denoted as the rank of the matroid. As is easy to verify, our fairness constraint is a case of rank-$k$ \emph{partition matroids}, where the ground set is partitioned into disjoint groups and the independent sets are exactly the sets in which, for each group, the number of elements from this group is at most the group capacity. And our streaming algorithms in Section~\ref{sec:alg} will be built on the properties of matroids. In this paper, we study the FDM problem in the streaming setting, where the elements in $X$ arrive one at a time and an algorithm must process each element sequentially in one pass using limited space (typically independent of $n$) and return a valid approximate solution $S$ for FDM.

\subsection{Streaming Algorithm for Diversity Maximization}
\label{subsec:alg:stream}

Before presenting our proposed algorithms for FDM, we first recall the streaming algorithm for (unconstrained) DM proposed by Borassi et al.~\cite{DBLP:conf/pods/BorassiELVZ19} in Algorithm~\ref{alg:sdm}. Let $ d_{min} = \min_{x,y \in X, x \neq y} d(x,y) $, $ d_{max} = \max_{x,y \in X} d(x,y) $, and $ \Delta = \frac{d_{max}}{d_{min}} $. Obviously, it always holds that $ \mathtt{OPT} \in [d_{min},d_{max}] $. Algorithm~\ref{alg:sdm} maintains a sequence $\mathcal{U}$ of values for guessing $\mathtt{OPT}$ within relative errors of $1-\varepsilon$ and initializes an empty solution $S_{\mu}$ for each $ \mu \in \mathcal{U} $ before processing the stream. Then, for each $x \in X$ and each $\mu \in \mathcal{U}$, if $ S_{\mu} $ contains less than $k$ elements and the distance between $x$ and $S_{\mu}$ is at least $\mu$, it will add $x$ to $S_{\mu}$. After processing all elements in $X$, the candidate solution that contains $k$ elements and maximizes the diversity is returned as the solution $S$ for DM.

\begin{algorithm}[t]
  \caption{Streaming Diversity Maximization}\label{alg:sdm}
  \Input{A stream $X$, a distance metric $d$, a parameter $\varepsilon \in (0,1)$, a size constraint $k \in \mathbb{Z}^{+}$}
  \Output{A set $S \subseteq X$ with $|S|=k$}
  $\mathcal{U} = \{\frac{d_{min}}{(1-\varepsilon)^j} \, : \, j \in \mathbb{Z}_{0}^{+} \wedge \frac{d_{min}}{(1-\varepsilon)^j} \in [d_{min}, d_{max}] \}$\;
  Initialize $ S_{\mu} = \emptyset $ for each $ \mu \in \mathcal{U} $\;
  \ForEach{$x \in X$}{
    \ForEach{$ \mu \in \mathcal{U} $}{
      \If{$ |S_{\mu}| < k $ \textnormal{and} $ d(x,S_{\mu}) \geq \mu $\label{ln:sdm:cond}}{
        $S_{\mu} \gets S_{\mu} \cup \{x\}$\;
      }
    }
  }
  \Return{$S \gets \argmax_{\mu \in \mathcal{U} : |S_{\mu}|=k} div(S_{\mu}) $}\;
\end{algorithm}

Algorithm~\ref{alg:sdm} has been proven to be $\frac{1-\varepsilon}{5}$-approximate for many diversity objectives~\cite{DBLP:conf/pods/BorassiELVZ19} including max-min dispersion. In Theorem~\ref{thm:sdm:approx}, we improve its approximation ratio for max-min dispersion to $\frac{1-\varepsilon}{2}$ by refining the analysis of~\cite{DBLP:conf/pods/BorassiELVZ19}. 

\begin{theorem}\label{thm:sdm:approx}
  Algorithm~\ref{alg:sdm} is a $\frac{1-\varepsilon}{2}$-approximation algorithm for the max-min diversity maximization problem.
\end{theorem}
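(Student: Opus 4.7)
The plan is to identify, among the geometric grid of guesses $\mathcal{U}$, a value $\mu^*$ that simultaneously (i) sits just below $\mathtt{OPT}/2$ so that $\mu^* \geq (1-\varepsilon)\,\mathtt{OPT}/2$, and (ii) forces $S_{\mu^*}$ to reach size exactly $k$. Once both are established, the construction of $S_{\mu^*}$ in line~\ref{ln:sdm:cond} immediately gives $div(S_{\mu^*}) \geq \mu^*$, hence $div(S) \geq div(S_{\mu^*}) \geq (1-\varepsilon)\,\mathtt{OPT}/2$, which is what we want.

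First I would take $\mu^*$ to be the largest element of $\mathcal{U}$ with $\mu^* \leq \mathtt{OPT}/2$, whenever such an element exists (i.e.\ when $\mathtt{OPT}/2 \geq d_{min}$); by the geometric spacing of $\mathcal{U}$, the next larger guess exceeds $\mathtt{OPT}/2$, so $\mu^*/(1-\varepsilon) > \mathtt{OPT}/2$, i.e.\ $\mu^* > (1-\varepsilon)\,\mathtt{OPT}/2$. The boundary case $\mathtt{OPT} < 2d_{min}$ would be handled separately: then $\mu = d_{min} \in \mathcal{U}$ itself satisfies $d_{min} \geq \mathtt{OPT}/2 \geq (1-\varepsilon)\,\mathtt{OPT}/2$, and $S_{d_{min}}$ trivially reaches $k$ elements because every pair in $X$ is at distance at least $d_{min}$ (assuming $|X|\geq k$).

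The heart of the argument is showing $|S_{\mu^*}|=k$. I would proceed by contradiction: suppose $|S_{\mu^*}|<k$ at the end of the stream. Let $S^*=\{x_1^*,\ldots,x_k^*\}$ be an optimum with $div(S^*)=\mathtt{OPT}$. For each $x_i^*$, since $|S_{\mu^*}|<k$ throughout the run, the only reason $x_i^*$ was not added at its arrival time is that the condition $d(x_i^*, S_{\mu^*}) \geq \mu^*$ failed; so there is a witness $y_i \in S_{\mu^*}$ with $d(x_i^*, y_i) < \mu^*$ (taking $y_i := x_i^*$ in case $x_i^* \in S_{\mu^*}$). If two distinct $x_i^*, x_j^*$ shared a witness $y$, then by the triangle inequality
\begin{equation*}
d(x_i^*, x_j^*) \leq d(x_i^*, y) + d(y, x_j^*) < 2\mu^* \leq \mathtt{OPT},
\end{equation*}
contradicting $div(S^*)=\mathtt{OPT}$. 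Hence $i \mapsto y_i$ is an injection from $S^*$ into $S_{\mu^*}$, forcing $|S_{\mu^*}|\geq k$, a contradiction. Therefore $|S_{\mu^*}|=k$, and combined with the first paragraph, the returned $S$ satisfies $div(S) \geq (1-\varepsilon)\,\mathtt{OPT}/2$.

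The main subtlety, compared with the $\frac{1-\varepsilon}{5}$ analysis of~\cite{DBLP:conf/pods/BorassiELVZ19}, is that the witness argument must be sharp enough to convert the factor of $2$ in the triangle inequality directly into the denominator of the approximation ratio; the earlier, looser analysis chains triangle inequalities through additional intermediate points and thereby pays an extra factor. The only real obstacle I anticipate is making sure the witness $y_i$ is well-defined for every $x_i^*$, which is precisely why maintaining the invariant $|S_{\mu^*}|<k$ \emph{throughout} the stream (rather than only at the end) is essential; treating the corner case $\mathtt{OPT} < 2d_{min}$ separately is routine.
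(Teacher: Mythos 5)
Your proposal is correct and follows essentially the same route as the paper's proof: both arguments hinge on showing that whenever $|S_{\mu}|<k$ at the end of the stream, every optimal element has a point of $S_{\mu}$ within distance $\mu$, and then a pigeonhole/injectivity argument with one application of the triangle inequality yields $\mathtt{OPT}<2\mu$ (you state the contrapositive, the paper the direct form), after which the geometric grid supplies the $(1-\varepsilon)$ factor. Your explicit treatment of the corner case $\mathtt{OPT}<2d_{min}$ and the remark that $S_{\mu}$ only grows (so the rejection-time witness survives to the end) are minor tightenings of details the paper leaves implicit.
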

\begin{proof}
  For each $\mu \in \mathcal{U}$, there are two cases for $ S_{\mu} $ after processing all elements in $X$: (1) If $|S_{\mu}| = k$, the condition of Line~\ref{ln:sdm:cond} guarantees that $ div(S_{\mu}) \geq \mu $; (2) If $|S_{\mu}| < k$, it holds that $ d(x,S_{\mu}) < \mu $ for every $ x \in X \setminus S_{\mu} $ since the fact that $x$ is not added to $S_{\mu}$ implies that $ d(x,S_{\mu}) < \mu $, as $ |S_{\mu}| < k $. Let us consider some $S_{\mu}$ with $|S_{\mu}| < k$. Suppose that $ S^*=\{s_1^*,\ldots,s_k^*\} $ is the optimal solution for DM on $X$. We define a function $f: S^* \rightarrow S_{\mu} $ that maps each element in $S^*$ to its nearest neighbor in $S_{\mu}$. As is shown above, $ d(s^*,f(s^*)) < \mu $ for each $s^* \in S^*$. Because $|S_{\mu}| < k$ and $ |S^*|=k $, there must exist two distinct elements $ s_a^*,s_b^* \in S^* $ with $ f(s_a^*) = f(s_b^*) $. For such $s_a^*,s_b^*$, we have
  \begin{displaymath}
    d(s_a^*,s_b^*) \leq d(s_a^*,f(s_a^*)) + d(s_b^*,f(s_b^*)) < 2\mu
  \end{displaymath}
  according to the triangle inequality. Thus, $ \mathtt{OPT} = div(S^*) \leq d(s_a^*,s_b^*) < 2\mu $ if $|S_{\mu}| < k$. Let $\mu^{\prime}$ be the smallest $ \mu \in \mathcal{U} $ with $ |S_{\mu}| < k $. We have got $ div(S^*) < 2\mu^{\prime} $ from the above results. Additionally, for $\mu^{\prime\prime} = (1-\varepsilon)\mu^{\prime}$, we must have $ |S_{\mu^{\prime\prime}}| = k $ and $div(S_{\mu^{\prime\prime}}) \geq \mu^{\prime\prime}$. Therefore, we have $ div(S) \geq \mu^{\prime\prime} = (1-\varepsilon)\mu^{\prime} \geq \frac{1-\varepsilon}{2} \cdot div(S^*) $ and conclude the proof.
\end{proof}

In terms of space and time complexity, Algorithm~\ref{alg:sdm} stores $ O(\frac{k\log{\Delta}}{\varepsilon}) $ elements and takes $ O(\frac{k\log{\Delta}}{\varepsilon}) $ time per element since it makes $ O(\frac{\log{\Delta}}{\varepsilon}) $ guesses for $\mathtt{OPT}$, keeps at most $k$ elements in each candidate, and requires at most $k$ distance computations to determine whether to add an element to a candidate or not. In Section~\ref{sec:alg}, we will show how Algorithm~\ref{alg:sdm} serves a building block in our proposed algorithms for FDM.

\section{Algorithms}
\label{sec:alg}

As shown in Section~\ref{subsec:def}, the fair diversity maximization (FDM) problem is NP-hard in general. Therefore, we will focus on efficient approximation algorithms for this problem. We first propose a $\frac{1-\varepsilon}{4}$-approximation streaming algorithm for FDM in the special case that there are only $m=2$ groups in the dataset. Then, we propose a $\frac{1-\varepsilon}{3m+2}$-approximation streaming algorithm for an arbitrary number $m$ of groups.

\subsection{Streaming Algorithm for \texorpdfstring{$m=2$}{m=2}}
\label{subsec:alg:stream:1}

\begin{algorithm}[t]
  \caption{\textsf{SFDM1}}\label{alg:sfdm:1}
  \Input{A stream $X = X_1 \cup X_2$, a distance metric $d$, a parameter $\varepsilon \in (0,1)$, two size constraints $k_1,k_2 \in \mathbb{Z}^{+}$ ($ k=k_1+k_2 $)}
  \Output{A set $S \subseteq X$ s.t.~$ |S \cap X_i|=k_i $ for $i \in \{1,2\}$}
  \tcc{Stream processing}
  $\mathcal{U} = \{\frac{d_{min}}{(1-\varepsilon)^j} \, : \, j \in \mathbb{Z}_{0}^{+} \wedge \frac{d_{min}}{(1-\varepsilon)^j} \in [d_{min}, d_{max}] \}$\;
  Initialize $ S_{\mu},S_{\mu,i} = \emptyset $ for every $ \mu \in \mathcal{U} $ and $i \in \{1,2\}$\;
  \ForEach{$x \in X$}{
    \ForEach{$ \mu \in \mathcal{U} $ \textnormal{and} $ i \in \{1,2\} $}{
      \If{$ |S_{\mu}| < k \wedge d(x,S_{\mu}) \geq \mu $\label{ln:sfdm:group:blind}}{
        $S_{\mu} \gets S_{\mu} \cup \{x\}$\;
      }
      \If{$ c(x) = i \wedge |S_{\mu,i}| < k_i \wedge d(x,S_{\mu,i}) \geq \mu $\label{ln:sfdm:group:specific}}{
        $S_{\mu,i} \gets S_{\mu,i} \cup \{x\}$\;
      }
    }
  }
  \tcc{Post-processing}
  $\mathcal{U}^{\prime} = \{ \mu \in \mathcal{U} : |S_{\mu}|=k \wedge |S_{\mu,i}|=k_i, \forall i \in \{1,2\} \}$\;\label{ln:sfdm:mu}
  \ForEach{$\mu \in \mathcal{U}^{\prime}$}{
    \If{$|S_{\mu} \cap X_i| < k_i$ for some $i \in \{1,2\}$}{
      \While{$ |S_{\mu} \cap X_i| < k_i$}{
        $ x^{+} \gets \argmax_{x \in S_{\mu,i}} d(x,S_{\mu} \cap X_i) $\;\label{ln:sfdm:post:3}
        $ S_{\mu} \gets S_{\mu} \cup \{x^{+}\} $\;
      }
      \While{$ |S_{\mu}| > k $}{
        $ x^{-} \gets \argmin_{x \in S_{\mu} \setminus X_{i}} d(x,S_{\mu} \cap X_i) $\;\label{ln:sfdm:post:2}
        $ S_{\mu} \gets S_{\mu} \setminus \{x^{-}\} $\;
      }
    }
  }
  \Return{$S \gets \argmax_{\mu \in \mathcal{U}^{\prime}} div(S_{\mu})$}\;
\end{algorithm}

Now we present our streaming algorithm in case of $m=2$ called \textsf{SFDM1}. The detailed procedure of \textsf{SFDM1} is described in Algorithm~\ref{alg:sfdm:1}. In general, it runs in two phases: \emph{stream processing} and \emph{post-processing}. In stream processing, for each guess $\mu \in \mathcal{U}$ of $\mathtt{OPT}_f$, it utilizes Algorithm~\ref{alg:sdm} to keep a group-blind candidate $S_{\mu}$ with size constraint $k$ and two group-specific candidates $S_{\mu,1}$ and $S_{\mu,2}$ with size constraints $k_1$ and $k_2$ for $X_1$ and $X_2$, respectively. The only difference from Algorithm~\ref{alg:sdm} is that the elements are filtered by group for maintaining $S_{\mu,1}$ and $S_{\mu,2}$. After processing all elements of $X$ in one pass, it will post-process the group-blind candidates so as to make them satisfy the fairness constraint. The post-processing is only performed on a subset $\mathcal{U}^{\prime}$ of $\mathcal{U}$ where $S_{\mu}$ contains $k$ elements and $S_{\mu,i}$ contains $k_i$ elements for each group $i \in \{1,2\}$. For each $\mu \in \mathcal{U}^{\prime}$, $S_{\mu}$ either has satisfied the fairness constraint or has one over-filled group $i_o$ and the other under-filled group $i_u$. If $S_{\mu}$ is not yet a fair solution, $S_{\mu}$ will be balanced for fairness by first adding $k_{i_u}-k_{i_u}^{\prime}$ elements, where $k_{i_u}^{\prime}=|S_{\mu} \cap X_{i_u}|$, from $S_{\mu,i_u}$ to $S_{\mu}$ and then removing the same number of elements from $S_{\mu} \cap X_{i_o}$. The elements to be added and removed are selected greedily like \textsf{GMM}~\cite{DBLP:journals/tcs/Gonzalez85} to minimize the loss in diversity: the element in $S_{\mu,i_u}$ that is the furthest from $S_{\mu} \cap X_{i_u}$ is picked for each insertion; and the element in $S_{\mu} \cap X_{i_o}$ that is the closest to $S_{\mu} \cap X_{i_u}$ is picked for each deletion. Finally, the fair candidate with the maximum diversity after post-processing is returned as the final solution for FDM. Next, we will analyze the approximation ratio and complexity of \textsf{SFDM1} theoretically.

\begin{figure}[t]
  \centering
  \includegraphics[width=0.45\textwidth]{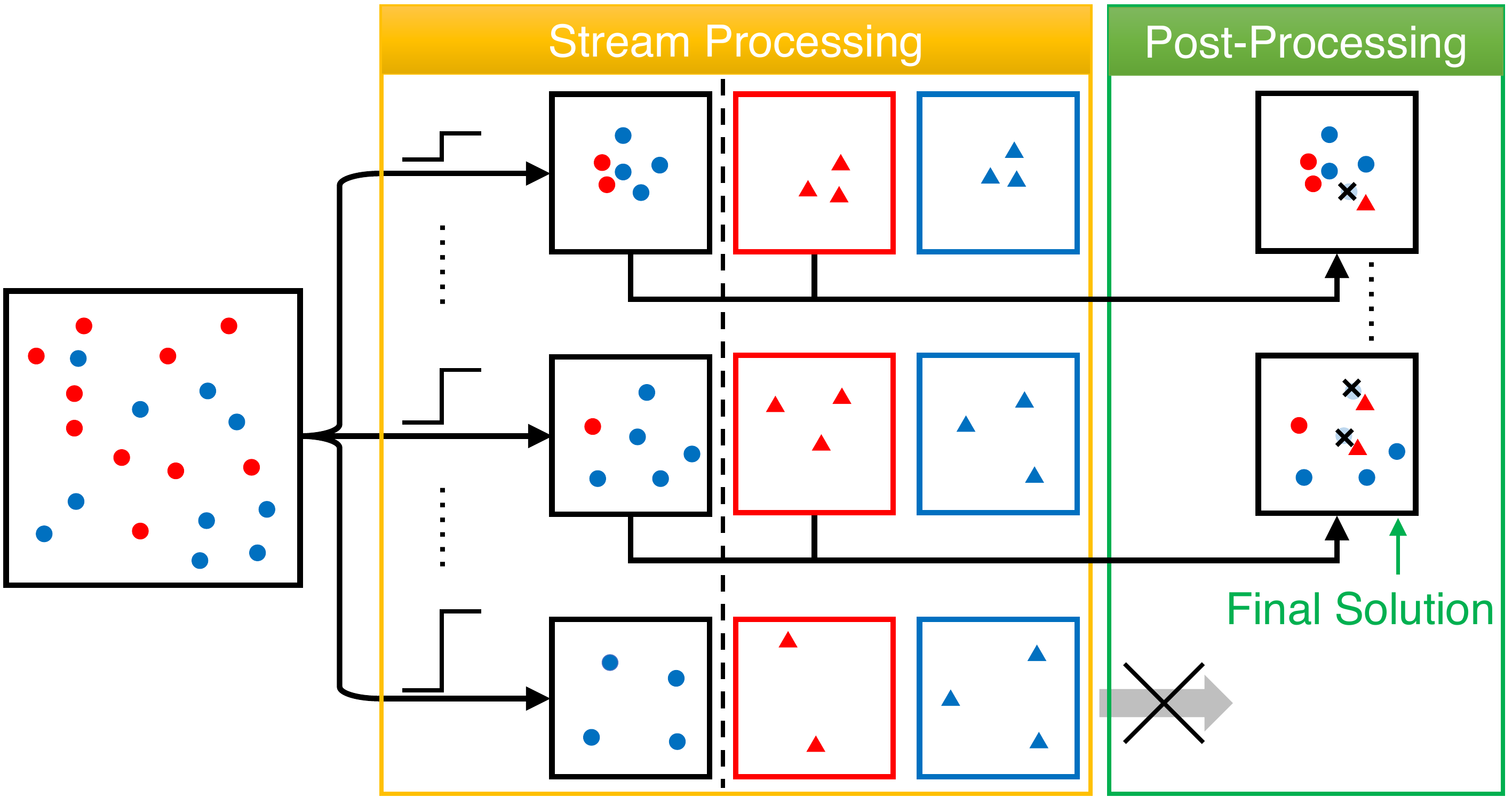}
  \caption{Illustration of \textsf{SFDM1}. In stream processing, one group-blind and two group-specific candidates are maintained for each guess $\mu$ of $\mathtt{OPT}_f$. Then, a subset of group-blind candidates are selected for post-processing by adding the elements from the under-filled group before deleting the elements from the over-filled one.}
  \label{fig:alg1}
\end{figure}

\vspace{1mm}
\noindent\textbf{Approximation Ratio:}
We first prove that \textsf{SFDM1} achieves an approximation ratio of $\frac{1-\varepsilon}{4}$ for FDM, where $\varepsilon \in (0,1)$. The proof is based on (1) the existence of $\mu^{\prime} \in \mathcal{U}^{\prime}$ such that $\mu^{\prime} \geq \frac{1-\varepsilon}{2} \cdot \mathtt{OPT}_f$ (Lemma~\ref{lm:sfdm:11}) and (2) $div(S_{\mu}) \geq \frac{\mu}{2}$ for each $\mu \in \mathcal{U}^{\prime}$ after post-processing (Lemma~\ref{lm:sfdm:12}).

\begin{lemma}\label{lm:sfdm:11}
  Let $\mu^{\prime}$ be the largest $\mu \in \mathcal{U}^{\prime}$. It holds that $ \mu^{\prime} \geq \frac{1-\varepsilon}{2} \cdot \mathtt{OPT}_f $ where $ \mathtt{OPT}_f $ is the optimal diversity of FDM on $X$.
\end{lemma}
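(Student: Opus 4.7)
The plan is to adapt the argument of Theorem~\ref{thm:sdm:approx} in parallel to each of the three candidates $S_\mu$, $S_{\mu,1}$, $S_{\mu,2}$, using $\mathtt{OPT}_f$ as a common lower bound on the relevant reference optimum. The key preliminary observation is that the optimal fair solution $S^*_{f}$ is, in particular, a valid $k$-subset of $X$, so $\mathtt{OPT} \geq \mathtt{OPT}_f$; moreover, $S^*_{f} \cap X_i$ is a valid $k_i$-subset of $X_i$ with pairwise distances at least $\mathtt{OPT}_f$, so the optimal diversity $\mathtt{OPT}_i$ of the $k_i$-DM problem restricted to $X_i$ satisfies $\mathtt{OPT}_i \geq \mathtt{OPT}_f$.

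Next I would replay the mapping/triangle-inequality argument of Theorem~\ref{thm:sdm:approx} three times. For $S_\mu$, taking $S^*_{f}$ as the reference optimal set in place of $S^*$, I conclude that $|S_\mu| < k$ implies $\mathtt{OPT}_f \leq \mathtt{OPT} < 2\mu$. For each $S_{\mu,i}$, running the same argument on the substream $X_i$ with $S^*_{f} \cap X_i$ as reference gives $|S_{\mu,i}| < k_i \Rightarrow \mathtt{OPT}_f \leq \mathtt{OPT}_i < 2\mu$. Contrapositively, every $\mu \in \mathcal{U}$ with $\mu \leq \mathtt{OPT}_f/2$ simultaneously forces $|S_\mu|=k$ and $|S_{\mu,i}|=k_i$ for $i \in \{1,2\}$, i.e., $\mu \in \mathcal{U}^{\prime}$.

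Finally I would exploit the geometric spacing of $\mathcal{U}$. Let $\tilde{\mu}$ be the largest grid point in $\mathcal{U}$ with $\tilde{\mu} \leq \mathtt{OPT}_f/2$; since consecutive grid points differ by a factor of $1/(1-\varepsilon)$, we have $\tilde{\mu} \geq (1-\varepsilon)\cdot\mathtt{OPT}_f/2 = \frac{1-\varepsilon}{2}\mathtt{OPT}_f$, and by the previous paragraph $\tilde{\mu} \in \mathcal{U}^{\prime}$, so $\mu^{\prime} \geq \tilde{\mu} \geq \frac{1-\varepsilon}{2}\mathtt{OPT}_f$. I would dispatch the boundary case $\mathtt{OPT}_f/2 < d_{min}$ separately: here $d_{min}$ itself lies in $\mathcal{U}^{\prime}$, since every arriving element is at distance at least $d_{min}=\mu$ from every stored element, so all three candidates necessarily fill; and trivially $d_{min} > \frac{1-\varepsilon}{2}\mathtt{OPT}_f$.

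The only mildly delicate step is making a single threshold work for all three candidates at once; this is settled by the chain $\mathtt{OPT}_f \leq \min\{\mathtt{OPT}, \mathtt{OPT}_1, \mathtt{OPT}_2\}$ established in the first step, which lets one replay the Theorem~\ref{thm:sdm:approx} proof uniformly. Everything else is an immediate consequence of the geometric guessing scheme already used in the unconstrained case.
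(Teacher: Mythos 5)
Your proposal is correct and follows essentially the same route as the paper: bound $\mathtt{OPT}_f$ by $\min\{\mathtt{OPT},\mathtt{OPT}_{k_1},\mathtt{OPT}_{k_2}\}$ using $S^*_f$ and $S^*_f\cap X_i$ as witnesses, invoke the ``$|S_\mu|<k \Rightarrow \mathtt{OPT}<2\mu$'' half of Theorem~\ref{thm:sdm:approx} for all three candidates, and finish with the geometric spacing of $\mathcal{U}$ (the paper phrases this as ``the next grid point above $\mu'$ fails,'' you phrase it as ``the largest grid point below $\mathtt{OPT}_f/2$ succeeds''---these are equivalent). Your explicit treatment of the boundary case $\mathtt{OPT}_f/2 < d_{min}$ is a small point the paper glosses over, but it does not change the argument.
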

\begin{proof}
  First of all, it is obvious that $ \mathtt{OPT}_f \leq \mathtt{OPT} $, where $ \mathtt{OPT} $ is the optimal diversity of unconstrained DM with $ k = k_1 + k_2 $ on $X$, since any valid solution for FDM must also be a valid solution for DM. Moreover, it holds that $ \mathtt{OPT}_f \leq \mathtt{OPT}_{k_i} $, where $ \mathtt{OPT}_{k_i} $ is the optimal diversity of unconstrained DM with size constraint $k_i$ on $X_i$ for both $ i \in \{1,2\} $, because the optimal solution must contain $k_i$ elements from $X_i$ and $div(\cdot)$ is a monotonically non-increasing function -- i.e., $ div(S \cup \{x\}) \leq div(S) $ for any $S \subseteq X$ and $x \in S \setminus X$. Therefore, we prove that $\mathtt{OPT}_f \leq div(S^* \cap X_i) \leq \mathtt{OPT}_{k_i}$.

  Then, according to the results of Theorem~\ref{thm:sdm:approx}, we have $ \mathtt{OPT} < 2\mu $ if $ S_{\mu} < k $ and $ \mathtt{OPT}_{k_i} < 2\mu $ if $ S_{\mu,i} < k_i $ for each $ i \in \{1,2\} $. Note that $ \mu^{\prime} $ is the largest $ \mu \in \mathcal{U}$ such that $|S_{\mu}|=k$, $|S_{\mu,1}|=k_1$, and $|S_{\mu,2}|=k_2$ after stream processing. For $\mu^{\prime\prime} = \frac{\mu^{\prime}}{1-\varepsilon} \in \mathcal{U}$, we have either $|S_{\mu^{\prime\prime}}| < k$ or $|S_{\mu^{\prime\prime},i}| < k_i$ for some $ i \in \{1,2\} $. Therefore, it holds that $ \mathtt{OPT}_f < 2\mu^{\prime\prime} \leq \frac{2}{1-\varepsilon} \cdot \mu^{\prime}$ and we conclude the proof.
\end{proof}

\begin{lemma}\label{lm:sfdm:12}
  For each $\mu \in \mathcal{U}^{\prime}$, $S_{\mu}$ must satisfy $div(S_{\mu}) \geq \frac{\mu}{2}$ and $|S_{\mu} \cap X_i| = k_i$ for both $i \in \{1,2\}$ after post-processing.
\end{lemma}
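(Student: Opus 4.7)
The plan is to fix $\mu\in\mathcal{U}'$, let $S_\mu^{(0)}$ denote the group-blind candidate at the start of post-processing, write $A$ for the set of elements added from $S_{\mu,i_u}$ during the addition loop, and $R$ for the set of elements removed from $S_\mu^{(0)}\cap X_{i_o}$ during the deletion loop. The cardinality claim $|S_\mu\cap X_i|=k_i$ is immediate from the stopping conditions of the two while-loops, so the substance is $div(S_\mu)\geq\mu/2$. I would first record the three ``free'' facts inherited from stream processing: $div(S_\mu^{(0)})\geq\mu$ because $S_\mu^{(0)}$ was built by Algorithm~\ref{alg:sdm}; $div(S_{\mu,i_u})\geq\mu$ for the same reason; and hence $div(A)\geq\mu$ since $A\subseteq S_{\mu,i_u}$. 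Then I would enumerate the pair types in the final $S_\mu=(S_\mu^{(0)}\setminus R)\cup A$ and show each pair has distance at least $\mu/2$.

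The pairs inside $S_\mu^{(0)}\setminus R$ and pairs inside $A$ are handled by the two observations just recorded. For a pair $(x^+,x)$ with $x^+\in A$ and $x\in S_\mu^{(0)}\cap X_{i_u}$ at the moment $x^+$ is inserted, I would argue a pigeonhole lemma: while $|S_\mu\cap X_{i_u}|<k_{i_u}=|S_{\mu,i_u}|$, some element of $S_{\mu,i_u}$ must lie at distance $\geq\mu/2$ from $S_\mu\cap X_{i_u}$, because otherwise the nearest-neighbor map $S_{\mu,i_u}\to S_\mu\cap X_{i_u}$ would collapse two elements by the triangle inequality, contradicting $div(S_{\mu,i_u})\geq\mu$. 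Since Line~\ref{ln:sfdm:post:3} picks the argmax, the chosen $x^+$ is at distance $\geq\mu/2$ from everything already in $S_\mu\cap X_{i_u}$, which covers this pair type for every addition step.

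The main obstacle is the cross-pair type $(x^+,y)$ with $x^+\in A$ and $y\in(S_\mu^{(0)}\cap X_{i_o})\setminus R$, since nothing from stream processing constrains how close $x^+$ and $y$ can be. Here my plan is a ``few bad elements'' argument combined with the greediness of Line~\ref{ln:sfdm:post:2}. Call $y\in S_\mu^{(0)}\cap X_{i_o}$ bad if $d(y,x^+_j)<\mu/2$ for some $x^+_j\in A$. For a fixed $x^+_j$, at most one $y\in S_\mu^{(0)}\cap X_{i_o}$ can be within $\mu/2$ of it, for otherwise the triangle inequality would give two such $y$'s at distance $<\mu$, contradicting $div(S_\mu^{(0)})\geq\mu$. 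Summing over $j$ yields at most $|A|=k_{i_u}-k'_{i_u}$ bad elements. Since $S_\mu\cap X_{i_u}$ is frozen during the deletion loop, the bad/good status of each $y$ is also frozen, and at every step in which a bad element still exists the greedy rule removes an element at distance $<\mu/2$ from $S_\mu\cap X_{i_u}$, i.e.\ a bad one; after at most $|A|$ deletions every surviving $y$ is good. That is exactly $|A|$ deletions, so all bad elements are eliminated.

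Finally, the remaining cross-pair $(x,y)$ with $x\in S_\mu^{(0)}\cap X_{i_u}$ and $y\in(S_\mu^{(0)}\cap X_{i_o})\setminus R$ has distance $\geq\mu$ because both are in $S_\mu^{(0)}$. Assembling the four cases yields $div(S_\mu)\geq\mu/2$ and concludes the lemma. The only nontrivial piece is the removal-phase analysis sketched above; the rest is bookkeeping plus one triangle-inequality pigeonhole.
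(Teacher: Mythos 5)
Your proof is correct and follows essentially the same route as the paper's: the greedy insertion step is justified by the same pigeonhole on $S_{\mu,i_u}$ (at most one element of $S_{\mu,i_u}$ within $\mu/2$ of each existing element of $S_\mu\cap X_{i_u}$), and the deletion step by the observation that each added element can have at most one close neighbor in $S_\mu^{(0)}\cap X_{i_o}$, so the at most $|A|$ bad elements are exactly the ones the argmin removes. Your write-up is in fact somewhat more explicit than the paper's about why the removal loop eliminates all bad elements, but it is the same argument.
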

\begin{proof}
  The candidate $S_{\mu}$ before post-processing has exactly $k=k_1+k_2$ elements but may not contain $k_1$ elements from $X_1$ and $k_2$ elements from $X_2$. If $S_{\mu}$ has exactly $k_1$ elements from $X_1$ and $k_2$ elements from $X_2$ and thus the post-processing is skipped, we have $div(S_{\mu}) \geq \mu$ according to Theorem~\ref{thm:sdm:approx}. Otherwise, assuming that $ |S_{\mu} \cap X_1| = k_1^{\prime} < k_1 $, we will add $k_1-k_1^{\prime}$ elements from $S_{\mu,1}$ to $ S_{\mu} $ and remove $k_1-k_1^{\prime}$ elements from $S_{\mu} \cap X_2$ for ensuring the fairness constraint. In Line~\ref{ln:sfdm:post:3}, all the $k_1$ elements in $S_{\mu,1}$ can be selected for insertion. Since the minimum distance between any pair of elements in $S_{\mu,1}$ is at least $\mu$, we can find at most one element $x \in S_{\mu,1}$ such that $d(x,y) < \frac{\mu}{2} $ for each $y \in S_{\mu} \cap X_1$. This means that there are at least $k_1-k_1^{\prime}$ elements from $S_{\mu,1}$ whose distances to all the existing elements in $S_{\mu} \cap X_1$ are greater than $\frac{\mu}{2}$. Accordingly, after adding $k_1-k_1^{\prime}$ elements from $S_{\mu,1}$ to $S_{\mu}$ greedily, it still holds that $d(x,y) \geq \frac{\mu}{2}$ for any $x,y \in S_{\mu} \cap X_1$. In Line~\ref{ln:sfdm:post:2}, for each element $x \in S_{\mu} \cap X_2$, there is at most one (newly added) element $y \in S_{\mu} \cap X_1$ such that $d(x,y) < \frac{\mu}{2}$. Meanwhile, it is guaranteed that $y$ is the nearest neighbor of $x$ in $S_{\mu}$ in this case. So, in Line~\ref{ln:sfdm:post:2}, every $x \in S_{\mu} \cap X_2$ with $d(x,S_{\mu} \cap X_2) < \frac{\mu^{\prime}}{2}$ is removed, since there are at most $ k_1-k_1^{\prime}$ such elements and the one with the smallest $d(x,S_{\mu} \cap X_2)$ is removed at each step. Therefore, $S_{\mu}$ contains $k_1$ elements from $X_1$ and $k_2$ elements from $X_2$ and satisfies that $div(S_{\mu}) \geq \frac{\mu}{2}$ after post-processing.
\end{proof}

\begin{theorem}\label{thm:sfdm:1:approx}
  \textnormal{\textsf{SFDM1}} returns a $\frac{1-\varepsilon}{4}$-approximate solution for the fair diversity maximization problem.
\end{theorem}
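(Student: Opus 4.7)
The plan is to combine the two preceding lemmas directly; the theorem is essentially their corollary. First I would observe that Lemma~\ref{lm:sfdm:12} already does the heavy lifting on the feasibility side: for every $\mu \in \mathcal{U}^{\prime}$, the post-processed candidate $S_{\mu}$ contains exactly $k_i$ elements from $X_i$ for both $i \in \{1,2\}$, so the returned set $S = \argmax_{\mu \in \mathcal{U}^{\prime}} div(S_{\mu})$ automatically satisfies the fairness constraint. This lets me reduce the proof to a purely numerical bound on $div(S)$.

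For the quality bound, I would apply Lemma~\ref{lm:sfdm:11} to produce the specific guess $\mu^{\prime} \in \mathcal{U}^{\prime}$ with $\mu^{\prime} \geq \frac{1-\varepsilon}{2} \cdot \mathtt{OPT}_f$, and then invoke Lemma~\ref{lm:sfdm:12} at this $\mu^{\prime}$ to get $div(S_{\mu^{\prime}}) \geq \mu^{\prime}/2$. Chaining these two inequalities yields
\begin{displaymath}
  div(S_{\mu^{\prime}}) \;\geq\; \frac{\mu^{\prime}}{2} \;\geq\; \frac{1-\varepsilon}{4} \cdot \mathtt{OPT}_f.
\end{displaymath}
Since $S$ is selected as the $\argmax$ of $div(S_{\mu})$ over $\mathcal{U}^{\prime}$ and $\mu^{\prime} \in \mathcal{U}^{\prime}$, I get $div(S) \geq div(S_{\mu^{\prime}}) \geq \frac{1-\varepsilon}{4} \cdot \mathtt{OPT}_f$, which is exactly the claimed bound.

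There is no real obstacle left at this stage, because the two structural claims that drive the argument (the existence of a sufficiently large $\mu^{\prime}$ in $\mathcal{U}^{\prime}$ and the fact that post-processing preserves diversity down to a factor of $1/2$) are already established. The only minor sanity check I would include is that $\mathcal{U}^{\prime}$ is nonempty so the final $\argmax$ is well-defined, and this again follows from Lemma~\ref{lm:sfdm:11}, which produces such a $\mu^{\prime}$ in $\mathcal{U}^{\prime}$.
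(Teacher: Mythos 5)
Your proposal is correct and matches the paper's proof essentially verbatim: the paper likewise chains $div(S) \geq div(S_{\mu^{\prime}}) \geq \frac{\mu^{\prime}}{2} \geq \frac{1-\varepsilon}{4} \cdot \mathtt{OPT}_f$ using Lemmas~\ref{lm:sfdm:11} and~\ref{lm:sfdm:12} with $\mu^{\prime}$ the largest element of $\mathcal{U}^{\prime}$. Your added remarks on feasibility and the nonemptiness of $\mathcal{U}^{\prime}$ are sound sanity checks that the paper leaves implicit.
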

\begin{proof}
  According to Lemmas~\ref{lm:sfdm:11} and~\ref{lm:sfdm:12}, $ div(S) \geq div(S_{\mu^{\prime}}) \geq \frac{\mu^{\prime}}{2} \geq \frac{1-\varepsilon}{4} \cdot \mathtt{OPT}_f$, where $\mu^{\prime}$ is the largest $\mu \in \mathcal{U}^{\prime}$.
\end{proof}

\noindent\textbf{Complexity Analysis:}
We analyze the time and space complexities of \textsf{SFDM1} in Theorem~\ref{thm:sfdm:1:complexity}.
\begin{theorem}\label{thm:sfdm:1:complexity}
  \textnormal{\textsf{SFDM1}} stores $O(\frac{k\log{\Delta}}{\varepsilon})$ elements in memory, takes $O(\frac{k\log{\Delta}}{\varepsilon})$ time per element for streaming processing, and spends $O(\frac{k^2\log{\Delta}}{\varepsilon})$ time for post-processing.
\end{theorem}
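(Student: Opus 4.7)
The plan is to bound the three resources separately, starting from the size of the guess set. Because $\mathcal{U}$ is a geometric progression with ratio $\tfrac{1}{1-\varepsilon}$ inside $[d_{min},d_{max}]$, it contains $O(\log_{1/(1-\varepsilon)}\Delta)=O(\frac{\log\Delta}{\varepsilon})$ values (using $\ln\frac{1}{1-\varepsilon}\geq\varepsilon$). Every subsequent count multiplies a per-guess cost by this factor.

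For the space bound, I would observe that for every $\mu\in\mathcal{U}$ the algorithm stores the group-blind candidate $S_{\mu}$ of cardinality at most $k$ and the two group-specific candidates $S_{\mu,1},S_{\mu,2}$ of cardinalities at most $k_1$ and $k_2=k-k_1$. Hence each guess accounts for $O(k)$ stored elements, and summing over $\mathcal{U}$ gives $O(\frac{k\log\Delta}{\varepsilon})$. For the per-element streaming time, processing a new element $x$ against one guess $\mu$ requires computing $d(x,S_{\mu})$ and, only for the group $i=c(x)$, $d(x,S_{\mu,i})$; each of these is at most $k$ distance evaluations, so the cost is $O(k)$ per guess and $O(\frac{k\log\Delta}{\varepsilon})$ in total per arriving element.

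The main obstacle is the post-processing bound: a naive implementation of the loops at Lines~\ref{ln:sfdm:post:3} and~\ref{ln:sfdm:post:2} recomputes the $\argmax$ or $\argmin$ from scratch, costing $O(k^{2})$ per iteration over $O(k)$ iterations, which yields $O(k^{3})$ per guess and misses the claimed bound by a factor of $k$. To shave this factor, I would maintain, for every $x\in S_{\mu,i}$ (during the insertion loop) and every $x\in S_{\mu}\setminus X_{i}$ (during the deletion loop), its current distance to the evolving set $S_{\mu}\cap X_{i}$. These distance tables can be initialised in $O(k^{2})$ time by brute force once per guess, and thereafter each insertion of $x^{+}$ or deletion of $x^{-}$ only needs to compare the stored distance of each remaining element against its distance to the single changed point, which takes $O(k)$ per update. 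Selecting the argmax/argmin is then a linear scan in $O(k)$, so the combined cost of the $O(k)$ insertions and $O(k)$ deletions is $O(k^{2})$ per guess.

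Multiplying this per-guess post-processing cost by $|\mathcal{U}'|\leq|\mathcal{U}|=O(\frac{\log\Delta}{\varepsilon})$ yields the advertised $O(\frac{k^{2}\log\Delta}{\varepsilon})$ post-processing time, completing the three bounds of the theorem. I do not anticipate any delicate steps beyond the incremental-distance trick just described; the remaining arguments are routine bookkeeping on the sizes of the candidate sets and the length of $\mathcal{U}$.
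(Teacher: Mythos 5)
Your proof is correct and follows essentially the same decomposition as the paper's: $|\mathcal{U}|=O(\frac{\log\Delta}{\varepsilon})$ guesses, $O(k)$ stored elements and $O(k)$ distance computations per guess per arriving element, and $O(k^2)$ post-processing work per guess. The only difference is that you make explicit the incremental distance-maintenance needed to avoid an $O(k^3)$ naive cost in the swap loops (and note that in the deletion loop the reference set $S_{\mu}\cap X_i$ is in fact static, so no updates are needed there); the paper's count of ``at most $k_i(k_i-k_i^{\prime})$ distance computations'' implicitly presupposes exactly this implementation.
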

\begin{proof}
  \textnormal{\textsf{SFDM1}} keeps $3$ candidates for each $\mu \in \mathcal{U}$ and $O(k)$ elements in each candidate. Hence, the total number of stored elements is $O(\frac{k\log{\Delta}}{\varepsilon})$ since $|\mathcal{U}|=O(\frac{\log{\Delta}}{\varepsilon})$. The stream processing performs at most $O(\frac{k\log{\Delta}}{\varepsilon})$ distance computations per element. Finally, for each $\mu \in \mathcal{U}^{\prime}$ in the post-processing, at most $ k_i(k_i-k_i^{\prime}) $ distance computations are performed to select the elements in $S_{\mu,i}$ to be added to $S_{\mu}$. To find the elements to be removed, at most $ k(k_i-k_i^{\prime}) $ distance computations are needed. Therefore, the time complexity for post-processing is $O(\frac{k^2\log{\Delta}}{\varepsilon})$ since $|\mathcal{U}^{\prime}|=O(\frac{\log{\Delta}}{\varepsilon})$.
\end{proof}

\noindent\textbf{Comparison with Prior Art:}
The idea of finding an initial solution and balancing it for fairness in \textsf{SFDM1} has also been used for \textsf{FairSwap}~\cite{moumoulidou_et_al:LIPIcs.ICDT.2021.13}. However, \textsf{FairSwap} only works in the offline setting, which keeps the whole dataset in memory and needs random accesses over it for solution computation, whereas \textsf{SFDM1} works in the streaming setting, which scans the dataset in one pass and uses only the elements in the candidates for post-processing. Compared with \textsf{FairSwap}, \textsf{SFDM1} reduces the space complexity from $O(n)$ to $O(\frac{k\log{\Delta}}{\varepsilon})$ and the time complexity from $O(n k)$ to $O(\frac{k^2\log{\Delta}}{\varepsilon})$ at the expense of lowering the approximation ratio by a factor of $1-\varepsilon$.

\subsection{Streaming Algorithm for General \texorpdfstring{$m$}{m}}
\label{subsec:alg:stream:2}

\begin{algorithm}[t]
  \caption{\textsf{SFDM2}}\label{alg:sfdm:2}
  \Input{A stream $X = \bigcup_{i = 1}^{m} X_i$, a distance metric $d$, a parameter $\varepsilon \in (0,1)$, $m$ size constraints $k_1,\ldots,k_m \in \mathbb{Z}^+$ ($ k = \sum_{i=1}^{m} k_i $)}
  \Output{A set $S \subseteq X$ s.t.~$ |S \cap X_i|=k_i $, $\forall i \in [m]$}
  \tcc{Stream processing}
  $\mathcal{U} = \{\frac{d_{min}}{(1-\varepsilon)^j} \, : \, j \in \mathbb{Z}_{0}^{+} \wedge \frac{d_{min}}{(1-\varepsilon)^j} \in [d_{min}, d_{max}] \}$\;
  Initialize $ S_{\mu},S_{\mu, i} = \emptyset $ for every $ \mu \in \mathcal{U} $ and $i \in [m]$\;
  \ForEach{$x \in X$}{
    \ForEach{$\mu \in \mathcal{U}$ \textnormal{and} $i \in [m]$}{
      \If{$|S_{\mu}| < k \wedge d(x,S_{\mu}) \geq \mu$}{
        $S_{\mu} \gets S_{\mu} \cup \{x\}$\;
      }
      \If{$c(x)=i \wedge |S_{\mu, i}|<k \wedge d(x,S_{\mu, i}) \geq \mu$}{
        $S_{\mu, i} \gets S_{\mu, i} \cup \{x\} $\;
      }
    }
  }
  \tcc{Post-processing}
  $\mathcal{U}^{\prime} = \{ \mu \in \mathcal{U} : |S_{\mu}|=k \wedge |S_{\mu, i}|\geq k_i, \forall i \in [m] \}$\;
  \ForEach{$\mu \in \mathcal{U}^{\prime}$}{
    For each group $i \in [m]$, pick $\min(k_i,|S_{\mu} \cap X_i|)$ elements arbitrarily 
    from $S_{\mu}$ as $S^{\prime}_{\mu}$\;
    Let $ S_{all} = (\bigcup_{i=1}^{m} S_{\mu, i}) \cup S_{\mu} $ and $ l = |S_{all}| $\;
    Create $l$ clusters $ \mathcal{C}=\{C_1, \ldots, C_l\} $, each of which contains
    one element in $S_{all}$\;\label{ln:sfdm:2:s1}
    \While{there exist $C_a,C_b \in \mathcal{C} $ s.t.~$d(x,y) < \frac{\mu}{m+1}$ for some $x \in C_a$ and $y \in C_b$}{
      Merge $C_a,C_b$ into a new cluster $C^{\prime} = C_a \cup C_b$\;
      $\mathcal{C} \gets \mathcal{C} \setminus \{C_a,C_b\} \cup \{C^{\prime}\}$\;\label{ln:sfdm:2:s2}
    }
    Let $\mathcal{M}_1=(S_{all},\mathcal{I}_1)$ and $\mathcal{M}_2=(S_{all},\mathcal{I}_2)$
    be two matroids, where $S \in \mathcal{I}_1$ iff $|S \cap X_i| \leq k_i$, $\forall i \in [m]$
    and $S \in \mathcal{I}_2$ iff $ |S \cap C| \leq 1$, $\forall C \in \mathcal{C}$\;
    Run Algorithm~\ref{alg:matroid} to augment $S^{\prime}_{\mu}$ so that $S^{\prime}_{\mu}$ is a maximum cardinality set in $\mathcal{I}_1 \cap \mathcal{I}_2$\;
  }
  \Return{$ S \gets \argmax_{\mu \in \mathcal{U}^{\prime} : |S^{\prime}_{\mu}| = k} div(S^{\prime}_{\mu}) $}\;
\end{algorithm}

Now we introduce our streaming algorithm called \textsf{SFDM2} that can work with an arbitrary $m \geq 2$. The detailed procedure of \textsf{SFDM2} is presented in Algorithm~\ref{alg:sfdm:2}. Similar to \textsf{SFDM1}, it also has two phases: \emph{stream processing} and \emph{post-processing}. In stream processing, it utilizes Algorithm~\ref{alg:sdm} to keep a group-blind candidate $S_{\mu}$ and $m$ group-specific candidates $S_{\mu, 1}, \ldots, S_{\mu, m}$ for all the $m$ groups. The difference from \textsf{SFDM1} is that the size constraint of each group-specific candidate is $k$ instead of $k_i$ for each group $i$. Then, after processing all elements in $X$, it requires a post-processing scheme for ensuring the fairness of candidates as well. Nevertheless, the post-processing procedures are totally different from \textsf{SFDM1}, since the swap-based balancing strategy cannot guarantee the validity of the solution with any theoretical bound. Like \textsf{SFDM1}, the post-processing is performed on a subset $\mathcal{U}^{\prime}$ where $S_{\mu}$ has $k$ elements and $S_{\mu,i}$ has at least $k_i$ elements for each group $i$. For each $\mu \in \mathcal{U}^{\prime}$, it initializes with a subset $S^{\prime}_{\mu}$ of $S_{\mu}$. For an over-filled group $i$ -- i.e., $|S_{\mu} \cap X_i| > k_i$, $S^{\prime}_{\mu}$ contains $k_i$ arbitrary elements from $S_{\mu}$; For an under-filled or exactly filled group $i$ -- i.e., $|S_{\mu} \cap X_i| \leq k_i$, $S^{\prime}_{\mu}$ contains all $k_i^{\prime}=|S_{\mu} \cap X_i|$ elements from $S_{\mu}$. Next, new elements from under-filled groups should be added to $S^{\prime}_{\mu}$ so that $S^{\prime}_{\mu}$ is a fair solution. The method to find the elements to be added is to divide the set $S_{all}$ of elements in all candidates into a set $\mathcal{C}$ of clusters which guarantees that $d(x,y) \geq \frac{\mu}{m+1}$ for any $x \in C_a$ and $y \in C_b$, where $C_a$ and $C_b$ are two different clusters in $\mathcal{C}$. Then, $S^{\prime}_{\mu}$ is limited to contain at most one element from each cluster after new elements are added so that $div(S^{\prime}_{\mu}) \geq \frac{\mu}{m+1}$. Meanwhile, $S^{\prime}_{\mu}$ should still satisfy the fairness constraint. To meet both requirements, the problem of adding new elements to $S^{\prime}_{\mu}$ is formulated as an instance of \emph{matroid intersection}~\cite{DBLP:journals/siamcomp/Cunningham86,DBLP:conf/focs/ChakrabartyLS0W19,nguyen2019note} as will be discussed later. Finally, it returns $S^{\prime}_{\mu}$ containing $k$ elements with maximum diversity after post-processing as the final solution for FDM.

\begin{algorithm}[t]
  \caption{Matroid Intersection}\label{alg:matroid}
  \Input{Two matroids $\mathcal{M}_1=(V,\mathcal{I}_1)$, $\mathcal{M}_2=(V,\mathcal{I}_2)$, a distance metric $d$, an initial set $S_0 \subseteq V$}
  \Output{A maximum cardinality set $S \subseteq V$ in $\mathcal{I}_1 \cap \mathcal{I}_2$}
  Initialize $S \gets S_0$, $V_1 = \{x \in V \setminus S : S \cup \{x\} \in \mathcal{I}_1\}$, and
  $V_2 = \{x \in V \setminus S : S \cup \{x\} \in \mathcal{I}_2\}$\;
  \While{$V_1 \cap V_2 \neq \emptyset$\label{ln:matroid:s}}{
    $x^* \gets \argmax_{x \in V_1 \cap V_2} d(x,S)$ and $S \gets S \cup \{x^*\}$\;
    \ForEach{$x \in V_1$}{
      $V_1 \gets V_1 \setminus \{x\}$ if $S \cup \{x\} \notin \mathcal{I}_1$\;
    }
    \ForEach{$x \in V_2$}{
      $V_2 \gets V_2 \setminus \{x\}$ if $S \cup \{x\} \notin \mathcal{I}_2$\;\label{ln:matroid:t}
    }
  }
  Build an augmentation graph $G$ for $S$\;
  \While{there is a directed path from $a$ to $b$ in $G$}{
    Let $P^*$ be a shortest path from $a$ to $b$ in $G$\;
    \ForEach{$x \in P^* \setminus \{a, b\}$}{
      $S \gets S \cup \{x\}$ if $x \notin S$\;
      $S \gets S \setminus \{x\}$ if $x \in S$\;
    }
    Rebuild $G$ for the updated $S$\;
  }
  \Return{$S$}\;
\end{algorithm}

\begin{figure}[t]
  \centering
  \includegraphics[width=0.45\textwidth]{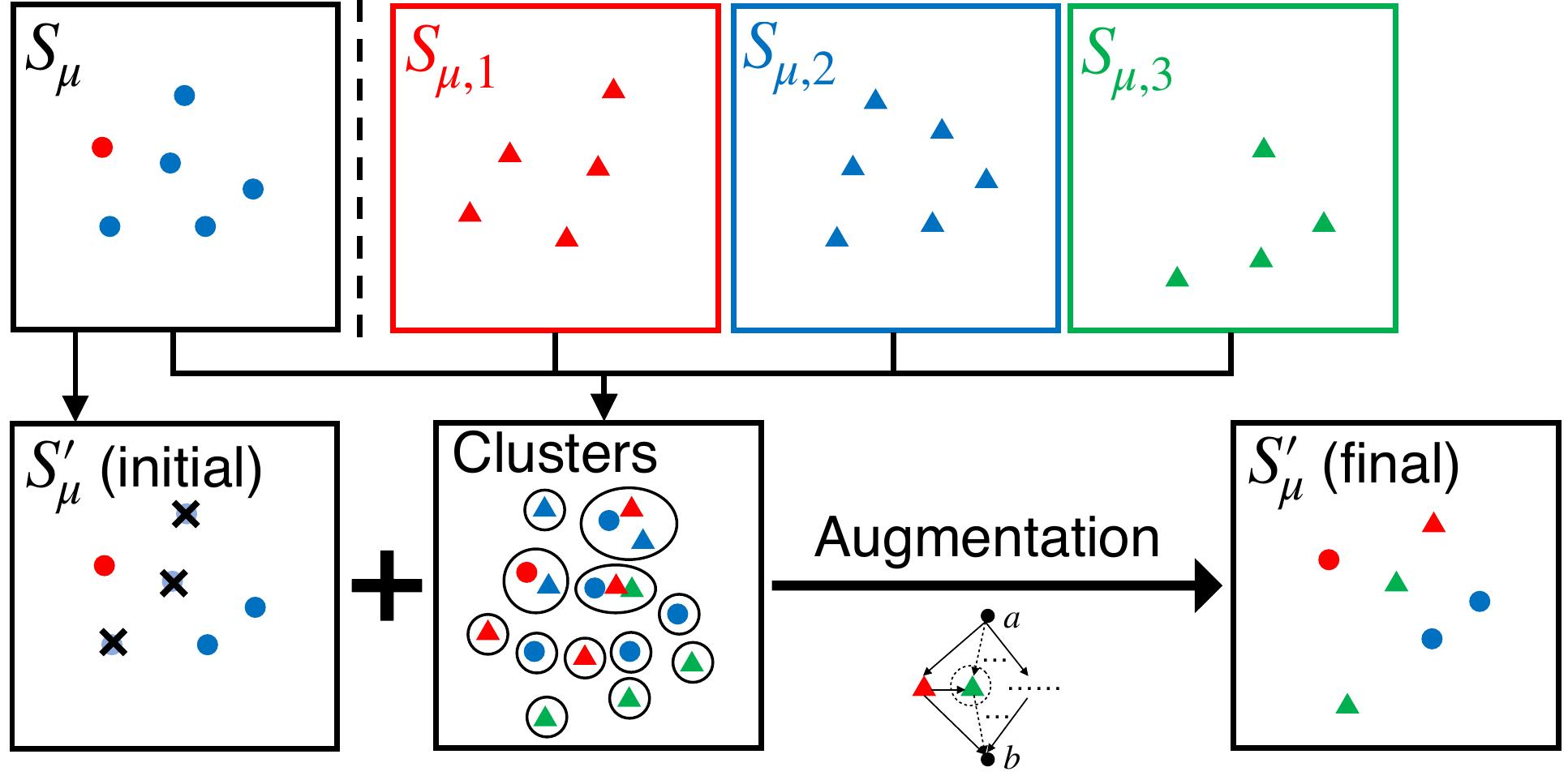}
  \caption{Illustration of post-processing in \textsf{SFDM2}. For each $\mu \in \mathcal{U}^{\prime}$, an initial $S^{\prime}_{\mu}$ is first extracted from $S_{\mu}$ by removing the elements from over-filled groups. Then, the elements in all candidates are divided into clusters. The final $S^{\prime}_{\mu}$ is augmented from the initial solution by adding new elements from under-filled groups based on matroid intersection.}
  \label{fig:alg2}
\end{figure}

\noindent\textbf{Matroid Intersection:}
Next, we describe how to use \emph{matroid intersection} for solution augmentation in \textsf{SFDM2}. We define the first rank-$k$ matroid $\mathcal{M}_1 = (V, \mathcal{I}_1)$ based on the fairness constraint, where the ground set $V$ is $S_{all}$ and $S \in \mathcal{I}_1$ iff $|S \cap X_i| \leq k_i$, $\forall i \in [m]$. Intuitively, a set $S$ is fair iff it is a maximal independent set in $\mathcal{I}_1$. Moreover, we define the second rank-$l$ ($l=|\mathcal{C}|$) matroid $\mathcal{M}_2=(V=S_{all},\mathcal{I}_2)$ on the set $\mathcal{C}$ of clusters, where $S \in \mathcal{I}_2$ iff $|S \cap C| \leq 1$, $\forall C \in \mathcal{C}$. Accordingly, the problem of adding new elements to $S^{\prime}_{\mu}$ for ensuring fairness can be seen as a \emph{matroid intersection} problem, which aims to find a maximum cardinality set $S \in \mathcal{I}_1 \cap \mathcal{I}_2$ for $\mathcal{M}_1=(S_{all},\mathcal{I}_1)$ and $\mathcal{M}_2=(S_{all},\mathcal{I}_2)$. A common solution for matroid intersection is the Cunningham's algorithm~\cite{DBLP:journals/siamcomp/Cunningham86} based on the \emph{augmentation graph} in Definition~\ref{def:aug:graph}.

\begin{definition}[Augmentation Graph~\cite{DBLP:journals/siamcomp/Cunningham86}]
\label{def:aug:graph}
  Given two matroids $\mathcal{M}_1=(V,\mathcal{I}_1)$ and $\mathcal{M}_2=(V,\mathcal{I}_2)$, a set $S \subset V$ such that $S \in \mathcal{I}_1 \cap \mathcal{I}_2$, and two sets $V_1=\{x \in V \setminus S : S \cup \{x\} \in \mathcal{I}_1\}$ and $V_2=\{x \in V \setminus S : S \cup \{x\} \in \mathcal{I}_2\}$, an augmentation graph is a digraph $G=(V \cup \{a, b\},E)$ where $a, b \notin V$. There is an edge $(a, x) \in E$ for each $x \in V_1$. There is an edge $(x, b) \in E$ for each $x \in V_2$. There is an edge $(y, x) \in E$ for each $x \in V \setminus S$, $y \in S$, such that $ S \cup \{x\} \notin \mathcal{I}_1 $ and $ S \cup \{x\} \setminus \{y\} \in \mathcal{I}_1$. There is an edge $(x, y) \in E$ for each $x \in V \setminus S$, $y \in S$, such that $ S \cup \{x\} \notin \mathcal{I}_2 $ and $ S \cup \{x\} \setminus \{y\} \in \mathcal{I}_2$.
\end{definition}

The Cunningham's algorithm~\cite{DBLP:journals/siamcomp/Cunningham86} is initialized with $S = \emptyset$ (or any $S \in \mathcal{I}_1 \cap \mathcal{I}_2$). At each step, it builds an augmentation graph $G$ for $\mathcal{M}_1$, $\mathcal{M}_2$, and $S$. If there is no directed path from $a$ to $b$ in $G$, then $S$ is returned as a maximum cardinality set. Otherwise, it finds the shortest path $P^*$ from $a$ to $b$ in $G$, and augments $S$ according to $P^*$: For each $x \in V \setminus S$ except $a$ and $b$, add $x$ to $S$; For each $x \in S$, remove $x$ from $S$.

We adapt the Cunningham's algorithm~\cite{DBLP:journals/siamcomp/Cunningham86} for our problem as shown in Algorithm~\ref{alg:matroid}. Our algorithm is initialized with $S^{\prime}_{\mu}$ instead of $\emptyset$. In addition, to reduce the cost of building $G$ and maximize the diversity, it first add the elements in $V_1 \cap V_2$ greedily to $S^{\prime}_{\mu}$ until $V_1 \cap V_2 = \emptyset$. This is because \emph{there exists a shortest path $P^*=\langle a, x, b \rangle$ in $G$ for any $x \in V_1 \cap V_2$}, which is easy to verify from Definition~\ref{def:aug:graph}. Finally, if $|S| < k$ after the above procedures, the standard Cunningham's algorithm will be used to augment $S$ for ensuring its maximality.

\vspace{1mm}
\noindent\textbf{Approximation Ratio:}
Next, we prove that \textsf{SFDM2} achieves an approximation ratio of $\frac{1-\varepsilon}{3m+2}$ for FDM. For the proof, we first show that the set $\mathcal{C}$ of clusters has several important properties (Lemma~\ref{lm:sfdm:21}). Then, we prove that Algorithm~\ref{alg:matroid} can return a fair solution for a specific $\mu$ based on the properties of $\mathcal{C}$ (Lemma~\ref{lm:sfdm:22}).

\begin{lemma}\label{lm:sfdm:21}
  The set $\mathcal{C}$ of clusters has the following properties: (i) for any element $x \in C_a$ and $y \in C_b$ ($a \neq b$), $d(x,y) \geq \frac{\mu}{m+1}$; (ii) each cluster $C$ contains at most one element from $S_{\mu}$ and $S_{\mu, i}$ for any $i \in [m]$; (iii) for any two elements $x,y \in C$, $d(x,y) < \frac{m}{m+1} \cdot \mu$.
\end{lemma}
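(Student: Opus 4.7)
The plan is to prove (i), (ii), (iii) in that order, using the following common setup: write $\tau := \mu/(m+1)$ and let $G$ be the \emph{threshold graph} on $S_{all}$ whose edges connect pairs at distance $< \tau$. The merging loop of Algorithm~\ref{alg:sfdm:2} is exactly the standard connected-components construction on $G$, so at termination the clusters in $\mathcal{C}$ coincide with the connected components of $G$.

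Property (i) is immediate from the loop's exit condition: the \textbf{while} loop stops precisely when no two clusters contain points within distance $\tau$. For (iii), once (ii) is in hand every cluster $C$ contains at most one member from each of the $m+1$ candidates $S_\mu, S_{\mu,1}, \ldots, S_{\mu,m}$, hence $|C| \leq m+1$; for any $x, y \in C$, a shortest (by edge count) $x$-$y$ path in $G$ has $L \leq |C| - 1 \leq m$ edges, each of length strictly less than $\tau$, so summing via the triangle inequality yields $d(x, y) < L\tau \leq m\tau = \frac{m}{m+1}\mu$.

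The main work is in (ii), which I would prove by contradiction. Suppose some cluster $C$ contains distinct $u, v$ that both belong to a single candidate $T \in \{S_\mu, S_{\mu,1}, \ldots, S_{\mu,m}\}$. Since $T$ is produced by Algorithm~\ref{alg:sdm} with parameter $\mu$, pairwise distances inside $T$ are at least $\mu$, so $d(u, v) \geq \mu = (m+1)\tau$. Because $C$ is connected in $G$, there is a shortest (edge-count) $u$-$v$ path $u = w_0, w_1, \ldots, w_L = v$ whose vertices are all distinct; the triangle inequality along it yields $d(u, v) < L\tau$, forcing $L \geq m+2$ and in particular giving the path at least $m+3$ vertices. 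Examine the first $m+2$ of them, $w_0, \ldots, w_{m+1}$: each lies in at least one of the $m+1$ candidates, so if no two of them shared a candidate, their candidate-membership sets would be pairwise-disjoint nonempty subsets of a universe of size $m+1$, forcing $m+2 \leq m+1$. Hence some $0 \leq p < q \leq m+1$ satisfy that $w_p, w_q$ lie in a common candidate, giving $d(w_p, w_q) \geq \mu$; yet the sub-path bound gives $d(w_p, w_q) < (q-p)\tau \leq (m+1)\tau = \mu$, a contradiction.

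The main obstacle is this pigeonhole step: it is not simply ``$m+2$ pigeons into $m+1$ holes'' because an element of $X_i$ may simultaneously appear in both $S_\mu$ and $S_{\mu, i}$, so each vertex carries a \emph{set} of candidate labels rather than a single one. Re-expressing pigeonhole in terms of pairwise-disjoint nonempty membership sets, as above, neatly handles this subtlety and is the one bookkeeping detail that genuinely requires attention.
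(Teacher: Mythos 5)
Your proof is correct and follows essentially the same route as the paper's: clusters are identified with connected components of the threshold graph at distance $\frac{\mu}{m+1}$, property (i) comes from the loop's exit condition, property (ii) from a pigeonhole over the $m+1$ candidates applied to a shortest path together with the triangle inequality and the $\geq\mu$ separation inside each candidate, and property (iii) from the resulting bound $|C|\leq m+1$. If anything your write-up is the more careful one --- the paper's version uses the threshold $\frac{\mu}{m}$ (an apparent typo for $\frac{\mu}{m+1}$) and argues minimality of the chosen path rather than explicitly examining $m+2$ consecutive vertices, but the underlying argument is identical.
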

\begin{proof}
  First of all, Property (i) holds from Lines~\ref{ln:sfdm:2:s1}--\ref{ln:sfdm:2:s2} of Algorithm~\ref{alg:sfdm:2}. Then, we prove Property (ii) by contradiction. Let us construct an undirected graph $G=(V,E)$ for a cluster $C \in \mathcal{C}$, where $V$ is the set of elements in $C$ and there exists an edge $(x,y) \in E$ iff $d(x,y) < \frac{\mu}{m}$. Based on Algorithm~\ref{alg:sfdm:2}, for any $x \in C$, there must exist some $y \in C$ ($x \neq y$) such that $d(x,y) < \frac{\mu}{m}$. Therefore, $G$ is a connected graph. Suppose that $C$ can contain more than one element from $S_{\mu}$ or $S_{\mu, i}$ for some $i \in [m]$. Let $ P_{x,y} = (x,\ldots,y) $ be the shortest path of $G$ between $x$ and $y$ where $x$ and $y$ are both from $S_{\mu}$ or $S_{\mu, i}$. Next, we show that the length of $P_{x,y}$ is at most $m+1$. If the length of $P_{x,y}$ is longer than $m+1$, there will be a sub-path $P_{x^{\prime},y^{\prime}}$ of $P_{x,y}$ where $x^\prime$ and $y^\prime$ are both from $S_{\mu}$ or $S_{\mu, i}$ and this violates the fact that $P_{x,y}$ is the shortest. Since the length of $P_{x,y}$ is at most $m+1$, we have $d(x,y) < (m+1) \cdot \frac{\mu}{m+1} = \mu$, which contradicts with the fact that $d(x,y) \geq \mu$, as they are both from $S_{\mu}$ or $S_{\mu, i}$. Finally, Property (iii) is a natural extension of Property (ii): Since each cluster $C$ contains at most one element from $S_{\mu}$ and $S_{\mu, i}$ for any $i \in [m]$, $C$ has at most $m+1$ elements. So, for any two elements $x,y \in C$, the length of the path between them is at most $m$ in $G$ and $d(x,y) < m \cdot \frac{\mu}{m+1} = \frac{m}{m+1} \cdot \mu$.
\end{proof}

\begin{lemma}\label{lm:sfdm:22}
  If $\mathtt{OPT}_f \geq \frac{3m+2}{m+1}\cdot\mu$, then Algorithm~\ref{alg:matroid} returns a size-$k$ subset $S^{\prime}_{\mu}$ such that $S^{\prime}_{\mu} \in \mathcal{I}_1 \cap \mathcal{I}_2$ and $div(S^{\prime}_{\mu}) \geq \frac{\mu}{m+1}$.
\end{lemma}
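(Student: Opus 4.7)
The plan is to reduce the claim to the existence of a size-$k$ set $T^\star\subseteq S_{all}$ lying in $\mathcal{I}_1\cap\mathcal{I}_2$ and then invoke the correctness of the Cunningham routine implemented in Algorithm~\ref{alg:matroid}. If such $T^\star$ exists, the maximum common independent set has cardinality at least $k$; since $\mathcal{I}_1$ also caps cardinality at $k$, this maximum equals $k$, so Algorithm~\ref{alg:matroid} returns $S^{\prime}_\mu\in\mathcal{I}_1\cap\mathcal{I}_2$ with $|S^{\prime}_\mu|=k$. The diversity bound $div(S^{\prime}_\mu)\geq\tfrac{\mu}{m+1}$ then follows because $\mathcal{I}_2$ forces at most one element per cluster, while Property~(i) of Lemma~\ref{lm:sfdm:21} separates elements in distinct clusters by at least $\tfrac{\mu}{m+1}$.

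To construct $T^\star$ I would build an injective proxy map $\tau\colon S^\star_f\rightarrow S_{all}$ that respects groups, i.e., $c(\tau(s^\star_j))=c(s^\star_j)$, and that sends the $k$ optimal elements into $k$ distinct clusters. For each $s^\star_j\in S^\star_f$ with $c(s^\star_j)=i$, set $\tau(s^\star_j)=s^\star_j$ if $s^\star_j\in S_{\mu,i}$ and otherwise $\tau(s^\star_j)=\argmin_{y\in S_{\mu,i}} d(s^\star_j,y)$. The insertion rule of Algorithm~\ref{alg:sfdm:2} guarantees that at the moment $s^\star_j$ was examined either $s^\star_j$ was added to $S_{\mu,i}$ or its distance to the then-current $S_{\mu,i}$ was strictly less than $\mu$, so $d(s^\star_j,\tau(s^\star_j))<\mu$. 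Because $\tau(s^\star_j)\in X_{c(s^\star_j)}$, the set $T^\star=\{\tau(s^\star_j):j\in[k]\}$ satisfies $|T^\star\cap X_i|\leq k_i$ for every $i$; together with $|T^\star|=k$ this forces $|T^\star\cap X_i|=k_i$ and hence $T^\star\in\mathcal{I}_1$.

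The membership $T^\star\in\mathcal{I}_2$, i.e.\ injectivity into clusters, is the heart of the proof and I would deduce it by contradiction using the assumption $\mathtt{OPT}_f\geq\tfrac{3m+2}{m+1}\mu$. Suppose two distinct proxies $\tau(s^\star_a),\tau(s^\star_b)$ sit in a common cluster $C\in\mathcal{C}$. Property~(iii) of Lemma~\ref{lm:sfdm:21} gives $d(\tau(s^\star_a),\tau(s^\star_b))<\tfrac{m}{m+1}\mu$, so by the triangle inequality
\begin{displaymath}
d(s^\star_a,s^\star_b)\leq d(s^\star_a,\tau(s^\star_a))+d(\tau(s^\star_a),\tau(s^\star_b))+d(\tau(s^\star_b),s^\star_b)<2\mu+\tfrac{m}{m+1}\mu=\tfrac{3m+2}{m+1}\mu,
\end{displaymath}
contradicting $d(s^\star_a,s^\star_b)\geq\mathtt{OPT}_f$. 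Hence $\tau$ maps distinct optima to distinct clusters and $T^\star\in\mathcal{I}_2$, as required.

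The delicate point I expect to be the main obstacle is uniformly ensuring $d(s^\star_j,\tau(s^\star_j))<\mu$: if $S_{\mu,i}$ has already reached its cap $k$ when $s^\star_j$ is examined, the streaming rule no longer supplies this distance bound directly. I would handle this case by exploiting Property~(ii) of Lemma~\ref{lm:sfdm:21}, which places the $k$ elements of a saturated $S_{\mu,i}$ into $k$ distinct clusters, and rerouting the proxy choice through a Hall-type matching coordinated across all groups; the same triangle-inequality argument then excludes cluster collisions because the gap $\mathtt{OPT}_f\geq\tfrac{3m+2}{m+1}\mu$ leaves enough slack for the coordination to succeed.
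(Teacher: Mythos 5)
Your proposal follows essentially the same route as the paper: reduce the lemma to the existence of a size-$k$ common independent set, certify that set by mapping each optimal element to a nearby representative in the group-specific candidates, use the triangle inequality together with Property~(iii) of Lemma~\ref{lm:sfdm:21} to rule out cluster collisions (the arithmetic $2\mu+\frac{m}{m+1}\mu=\frac{3m+2}{m+1}\mu$ is exactly the paper's), and get the diversity bound from Property~(i). The one place you leave a sketch --- the saturated case $|S_{\mu,i}|=k$, where $d(x,S_{\mu,i})<\mu$ is no longer guaranteed --- is closed in the paper not by a triangle-inequality or Hall-type argument but by pure counting: Property~(ii) puts the $k$ elements of a saturated $S_{\mu,i}$ into $k$ distinct clusters, all other groups together claim at most $k-k_i$ clusters, so at least $k_i$ distinct clusters containing elements of $X_i$ remain available; filling in that one observation makes your argument complete.
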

\begin{proof}
  First of all, the initial $S^{\prime}_{\mu}$ is a subset of $S_{\mu}$. According to Property (ii) of Lemma~\ref{lm:sfdm:21}, all elements of $S^{\prime}_{\mu}$ are in different clusters of $\mathcal{C}$ and thus $S^{\prime}_{\mu} \in \mathcal{I}_1 \cap \mathcal{I}_2$. The analysis of~\cite{DBLP:journals/siamcomp/Cunningham86} guarantees that Algorithm~\ref{alg:matroid} can find a size-$k$ set in $\mathcal{I}_1 \cap \mathcal{I}_2$ as long as it exists. Next, we will show such a set exists when $\mathtt{OPT}_f \geq \frac{3m+2}{m+1}\cdot\mu$. To verify this, we need to identify $k_i$ clusters of $\mathcal{C}$ that contain at least one element from $X_i$ for each $i\in[m]$ and show that all $k=\sum_{i=1}^{m} k_i$ clusters are distinct. Here, we consider two cases for each group $i\in[m]$.
  \begin{itemize}[leftmargin=*]
    \item \textbf{Case 1:} For each $i \in [m]$ such that $k_i \leq |S_{\mu, i}| < k$, we have $d(x,S_{\mu, i}) < \mu$ for each $x \in X_i$. Given the optimal solution $S^*_f$, we define a function $f$ that maps each $x^* \in S^*_f$ to its nearest neighbor in $S_{\mu, i}$. For two elements $x^*_a, x^*_b \in S^*_f$ in these groups, we have $d(x^*_a, f(x^*_a)) < \mu$, $d(x^*_b, f(x^*_b)) < \mu$, and $d(x^*_a, x^*_b) \geq \mathtt{OPT}_f = div(S^*_f)$. Therefore, $ d(f(x^*_a),f(x^*_b)) > \mathtt{OPT}_f - 2\mu $. Since $\mathtt{OPT}_f \geq \frac{3m+2}{m+1}\cdot\mu$, $ d(f(x^*_a),f(x^*_b)) > \frac{3m+2}{m+1}\cdot\mu - 2\mu = \frac{m}{m+1}\cdot\mu$. According to Property (iii) of Lemma~\ref{lm:sfdm:21}, it is guaranteed that $f(x^*_a)$ and $f(x^*_b)$ are in different clusters. By identifying all the clusters that contains $f(x^*)$ for all $x^* \in S^*_f$, we have found $k_i$ clusters for each group $i \in [m]$ such that $k_i \leq |S_{\mu,i}| < k$. And all the clusters found are guaranteed to be distinct.
    \item \textbf{Case 2:} For all $i \in [m]$ such that $|S_{\mu,i}| = k$, we are able to find $k$ clusters that contain one element from $S_{\mu, i}$ based on Property (ii) of Lemma~\ref{lm:sfdm:21}. For such a group $i$, even though $k-k_i$ clusters have been identified for all other groups, there are still at least $k_i$ clusters available for selection. Therefore, we can always find $k_i$ clusters that are distinct from all the clusters identified by any other group for such a group $X_i$.
  \end{itemize}
  Considering both cases, we have proven the existence of a size-$k$ set in $\mathcal{I}_1 \cap \mathcal{I}_2$. Finally, for any set $S \in \mathcal{I}_2$, we have $div(S) \geq \frac{\mu}{m+1}$ according to Property (i) of Lemma~\ref{lm:sfdm:21}.
\end{proof}

\begin{theorem}\label{thm:sfdm:2:approx}
  \textnormal{\textsf{SFDM2}} achieves a $\frac{1-\varepsilon}{3m+2}$-approximation for the fair diversity maximization problem.
\end{theorem}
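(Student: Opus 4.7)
The plan is to combine the structural guarantee of Lemma~\ref{lm:sfdm:22} with a feasibility-of-candidates argument in the style of Theorem~\ref{thm:sdm:approx}, and then pick the right threshold $\mu \in \mathcal{U}$ so that the geometric grid loses only a $(1-\varepsilon)$ factor.

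First, I would show that every $\mu \in \mathcal{U}$ with $\mu \leq \tfrac{1}{2}\mathtt{OPT}_f$ belongs to $\mathcal{U}^{\prime}$, i.e.\ $|S_{\mu}|=k$ and $|S_{\mu,i}|\geq k_i$ for every $i\in[m]$. For the group-blind candidate this is the contrapositive of the pigeonhole / triangle-inequality argument in the proof of Theorem~\ref{thm:sdm:approx}: if $|S_{\mu}|<k$, two elements of the optimal $k$-DM solution share a nearest neighbor in $S_{\mu}$, so $\mathtt{OPT}<2\mu$, and $\mathtt{OPT}_f\leq\mathtt{OPT}$ gives the bound. For each group-specific candidate, the same pigeonhole argument applied to a $k_i$-element optimal DM solution on $X_i$ yields $\mathtt{OPT}_{k_i}<2\mu$ whenever $|S_{\mu,i}|<k_i$; combined with the monotonicity $\mathtt{OPT}_f\leq\mathtt{OPT}_{k_i}$ (already noted in the proof of Lemma~\ref{lm:sfdm:11}, since $S^*_f\cap X_i$ has exactly $k_i$ elements and diversity at least $\mathtt{OPT}_f$), this again forces $\mathtt{OPT}_f<2\mu$.

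Next, observing that $\tfrac{m+1}{3m+2}\leq\tfrac{1}{2}$ for every $m\geq 1$, I would let $\mu^*$ be the largest element of $\mathcal{U}$ with $\mu^*\leq\tfrac{m+1}{3m+2}\mathtt{OPT}_f$. The first step places $\mu^*$ in $\mathcal{U}^{\prime}$, so the post-processing invokes Algorithm~\ref{alg:matroid} on $\mu^*$; since $\mathtt{OPT}_f\geq\tfrac{3m+2}{m+1}\mu^*$ by construction, Lemma~\ref{lm:sfdm:22} delivers a size-$k$ fair subset $S^{\prime}_{\mu^*}$ with $div(S^{\prime}_{\mu^*})\geq\tfrac{\mu^*}{m+1}$. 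By the geometric spacing of $\mathcal{U}$, the next larger grid point $\mu^*/(1-\varepsilon)$ exceeds $\tfrac{m+1}{3m+2}\mathtt{OPT}_f$, hence $\mu^*>(1-\varepsilon)\tfrac{m+1}{3m+2}\mathtt{OPT}_f$. Since the algorithm returns the best candidate over $\mathcal{U}^{\prime}$, chaining these inequalities gives
\[
div(S)\;\geq\;div(S^{\prime}_{\mu^*})\;\geq\;\frac{\mu^*}{m+1}\;>\;\frac{1-\varepsilon}{3m+2}\,\mathtt{OPT}_f,
\]
which is the claimed approximation.

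I do not expect a real obstacle here: Lemma~\ref{lm:sfdm:22} already does the combinatorial heavy lifting (existence of a size-$k$ fair set in $\mathcal{I}_1\cap\mathcal{I}_2$ plus the $\tfrac{\mu}{m+1}$ separation), and Algorithm~\ref{alg:matroid} inherits correctness from Cunningham's matroid-intersection procedure. The one point requiring attention is that \textsf{SFDM2} fills each $S_{\mu,i}$ up to size $k$ rather than $k_i$, yet $\mathcal{U}^{\prime}$ only demands $|S_{\mu,i}|\geq k_i$; the pigeonhole argument above is deliberately stated against a $k_i$-element witness in $X_i$ so that exactly this weaker condition suffices. A mild boundary case is $\mathtt{OPT}_f<\tfrac{3m+2}{m+1}d_{min}$, in which no grid point meets the threshold; there, however, any valid fair solution already has diversity $\geq d_{min}>\tfrac{1-\varepsilon}{3m+2}\mathtt{OPT}_f$, so the bound holds automatically.
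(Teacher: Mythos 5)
Your proposal is correct and follows essentially the same route as the paper: both establish (via the pigeonhole argument of Lemma~\ref{lm:sfdm:11}, adapted to the relaxed condition $|S_{\mu,i}|\geq k_i$) that every grid point below $\tfrac{1}{2}\mathtt{OPT}_f$ lies in $\mathcal{U}^{\prime}$, then select the grid point just below $\tfrac{m+1}{3m+2}\mathtt{OPT}_f$ and invoke Lemma~\ref{lm:sfdm:22} to obtain $div(S^{\prime}_{\mu})\geq\tfrac{\mu}{m+1}\geq\tfrac{1-\varepsilon}{3m+2}\mathtt{OPT}_f$. Your explicit treatment of the $d_{min}$ boundary case and of the $|S_{\mu,i}|\geq k_i$ subtlety are minor additions the paper leaves implicit.
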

\begin{proof}
  Let $\mu^{-}$ be the smallest $\mu$ not in $\mathcal{U}^{\prime}$. It holds that $\mu^{-} \geq \frac{\mathtt{OPT}_f}{2}$ (see Lemma~\ref{lm:sfdm:11}). Thus, there is some $\mu < \mu^{-}$ in $\mathcal{U}^{\prime}$ such that $\mu \in [\frac{(m+1)(1-\varepsilon)}{3m+2}\cdot\mathtt{OPT}_f, \frac{m+1}{3m+2}\cdot\mathtt{OPT}_f]$, as $\frac{m+1}{3m+2} < \frac{1}{2}$ for any $m \in \mathbb{Z}^{+}$. Therefore, \textsf{SFDM2} provides a fair solution $S$ such that $div(S) \geq div(S^{\prime}_{\mu}) \geq \frac{\mu}{m+1} \geq \frac{1-\varepsilon}{3m+2}\cdot\mathtt{OPT}_f$.
\end{proof}

\noindent\textbf{Complexity Analysis:}
We analyze the time and space complexities of \textsf{SFDM2} in Theorem~\ref{thm:sfdm:2:complexity}.
\begin{theorem}\label{thm:sfdm:2:complexity}
  \textnormal{\textsf{SFDM2}} stores $O(\frac{km\log{\Delta}}{\varepsilon})$ elements, takes $ O(\frac{k\log{\Delta}}{\varepsilon}) $ time per element for streaming processing, and spends $ O\big(\frac{k^2m\log{\Delta}}{\varepsilon} \cdot (m+\log^2{k})\big) $ time for post-processing.
\end{theorem}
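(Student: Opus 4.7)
The plan is to mirror the template used in the proof of Theorem~\ref{thm:sfdm:1:complexity} and account separately for (i) space, (ii) per-element stream work, and (iii) post-processing, with the matroid intersection step being the key new ingredient.

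For space, I would observe that \textsf{SFDM2} stores, per guess $\mu\in\mathcal{U}$, one group-blind candidate $S_\mu$ and $m$ group-specific candidates $S_{\mu,1},\dots,S_{\mu,m}$, each of size at most $k$. Since $|\mathcal{U}|=O(\log\Delta/\varepsilon)$, the total number of stored elements is $O((m+1)k\cdot\log\Delta/\varepsilon)=O(km\log\Delta/\varepsilon)$. For the per-element stream cost, I would note that although the loop iterates over $\mu\in\mathcal{U}$ and $i\in[m]$, for a fixed $\mu$ the group-blind test is the same for all $i$ and the group-specific test only does real work when $c(x)=i$ (one choice of $i$). Each candidate has at most $k$ elements, so computing $d(x,S_\mu)$ and $d(x,S_{\mu,c(x)})$ takes $O(k)$ distance evaluations per $\mu$, giving $O(k\log\Delta/\varepsilon)$ time per element.

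The main work is the post-processing bound, which I would decompose per $\mu\in\mathcal{U}^{\prime}$. First, forming $S_{all}$ of size $O(km)$ and running the merging loop in Lines~\ref{ln:sfdm:2:s1}--\ref{ln:sfdm:2:s2} can be implemented by computing all $O((km)^2)=O(k^2m^2)$ pairwise distances and applying union-find, giving the $m$ summand inside the parenthesis (after multiplying by $k^{-2}$). Extracting the initial $S^{\prime}_\mu$ from $S_\mu$ is $O(k)$. Next, I would bound the cost of Algorithm~\ref{alg:matroid}. The greedy while-loop in Lines~\ref{ln:matroid:s}--\ref{ln:matroid:t} makes at most $k$ insertions, each updating $V_1, V_2$ against $|S_{all}|=O(km)$ candidates; using the partition-matroid structure (membership testing is $O(1)$ given group counts) and one $O(k)$ distance scan to pick the greedy maximizer, this costs $O(k\cdot km)=O(k^2m)$. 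For the remaining augmenting-path phase, I would invoke a standard fast matroid-intersection routine (e.g., Cunningham~\cite{DBLP:journals/siamcomp/Cunningham86} with the speedups of~\cite{DBLP:conf/focs/ChakrabartyLS0W19,nguyen2019note}) applied to a ground set of size $O(km)$ and rank at most $k$, which yields an $O(km\cdot k\log^{2}k)=O(k^2m\log^{2}k)$ bound on the augmentation work. Summing these two contributions gives $O(k^2m\cdot(m+\log^{2}k))$ per guess.

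Finally, since $|\mathcal{U}^{\prime}|\le|\mathcal{U}|=O(\log\Delta/\varepsilon)$, multiplying the per-guess post-processing cost by $|\mathcal{U}^{\prime}|$ yields the claimed bound $O\!\bigl(\tfrac{k^2m\log\Delta}{\varepsilon}\cdot(m+\log^{2}k)\bigr)$. I anticipate that the trickiest part is justifying the matroid-intersection runtime cleanly: one must argue that the oracle calls for the partition matroid $\mathcal{M}_1$ and the partition-by-clusters matroid $\mathcal{M}_2$ are $O(1)$ (given cached group/cluster counts), and that the initial $S^{\prime}_\mu$ is already independent in $\mathcal{I}_1\cap\mathcal{I}_2$ so the generic Cunningham bound applies starting from a nonempty set. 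Once this is in hand, the three-part decomposition produces the stated complexities directly.
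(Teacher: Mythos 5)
Your proposal is correct and follows essentially the same decomposition as the paper's proof: $(m+1)$ candidates of size $O(k)$ per guess for space, only two candidates touched per arriving element for the stream cost, and a per-guess post-processing budget of $O(k)$ for the initial solution, $O(k^2m^2)$ for clustering, $O(k^2m)$ for the greedy phase of Algorithm~\ref{alg:matroid}, and $O(k^2m\log^2 k)$ for Cunningham's algorithm, multiplied by $|\mathcal{U}^{\prime}|=O(\log\Delta/\varepsilon)$. The extra detail you supply (union-find for the merging loop, $O(1)$ oracle calls for the partition matroids, and the initial set being independent in $\mathcal{I}_1\cap\mathcal{I}_2$) only elaborates steps the paper asserts without justification.
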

\begin{proof}
  \textsf{SFDM2} keeps $m+1$ candidates for each $\mu \in \mathcal{U}$ and $O(k)$ elements in each candidate. So, the total number of elements stored by \textsf{SFDM2} is $O(\frac{km\log{\Delta}}{\varepsilon})$. Only $2$ candidates are checked in streaming processing for each element and thus $O(\frac{k\log{\Delta}}{\varepsilon})$ distance computations are needed. In the post-processing of each $\mu$, we need $O(k)$ time to get the initial solution, $O(k^2m^2)$ time to cluster $S_{all}$, and $O(k^2m)$ time to augment the candidate using Lines~\ref{ln:matroid:s}--\ref{ln:matroid:t} of Algorithm~\ref{alg:matroid}. The time complexity of the Cunningham's algorithm is $O(k^2m \log^2{k})$ according to~\cite{nguyen2019note, DBLP:conf/focs/ChakrabartyLS0W19}. To sum up, the overall time complexity of post-processing is $ O\big(\frac{k^2m\log{\Delta}}{\varepsilon} \cdot (m+\log^2{k})\big) $.
\end{proof}

\noindent\textbf{Comparison with Prior Art:}
Finding a fair solution based on \emph{matroid intersection} has been used by existing methods for fair $k$-center~\cite{DBLP:conf/icml/JonesNN20, DBLP:journals/algorithmica/ChenLLW16, DBLP:conf/icml/ChiplunkarKR20} and fair diversity maximization~\cite{moumoulidou_et_al:LIPIcs.ICDT.2021.13}. \textsf{SFDM2} adopts a similar method to \textsf{FairFlow}~\cite{moumoulidou_et_al:LIPIcs.ICDT.2021.13} to construct the clusters and matroids. But \textsf{FairFlow} solves \emph{matroid intersection} as a max-flow problem on a digraph. Its solution is of poor quality in practice, particularly so when $m$ is large. Thus, \textsf{SFDM2} uses a different method for \emph{matroid intersection} based on the Cunningham's algorithm, which initializes with a partial solution instead of $\emptyset$ for higher efficiency and adds elements greedily like \textsf{GMM}~\cite{DBLP:journals/tcs/Gonzalez85} for higher diversity. Hence, \textsf{SFDM2} has significantly higher solution quality than \textsf{FairFlow} though its approximation ratio is lower.

\section{Experiments}
\label{sec:exp}

In this section, we evaluate the performance of our proposed algorithms on several real-world and synthetic datasets. We first introduce our experimental setup in Section~\ref{subsec:setup}. Then, the experimental results are presented in Section~\ref{subsec:results}.

\begin{table}[t]
\centering
\caption{Statistics of datasets in the experiments}
\label{tbl:stat}
\begin{tabular}{|c|c|c|c|c|}
\hline
\textbf{dataset}   & $n$            & $m$          & \# \textbf{features} & \textbf{distance metric} \\ \hline
\textbf{Adult}     & $48,842$       & $2$/$5$/$10$ & $6$                  & Euclidean                \\ \hline
\textbf{CelebA}    & $202,599$      & $2$/$4$      & $41$                 & Manhattan                \\ \hline
\textbf{Census}    & $2,426,116$    & $2$/$7$/$14$ & $25$                 & Manhattan                \\ \hline
\textbf{Lyrics}    & $122,448$      & $15$         & $50$                 & Angular                  \\ \hline
\textbf{Synthetic} & $10^3$--$10^7$ & $2$--$20$    & $2$                  & Euclidean                \\ \hline
\end{tabular}
\end{table}

\begin{table*}[t]
\centering
\caption{Overview of the performance of different algorithms ($k=20$)}
\label{tbl:exp:overview}
\begin{tabular}{|c|c|c|c|c|c|c|c|c|c|c|c|c|c|}
\hline
\multirow{2}{*}{\textbf{Dataset}} &
\multirow{2}{*}{\textbf{Group}} &
\multirow{2}{*}{$m$} &
\textsf{GMM} &
\multicolumn{2}{c|}{\textsf{FairSwap}} &
\multicolumn{2}{c|}{\textsf{FairFlow}} &
\multicolumn{3}{c|}{\textsf{SFDM1}} &
\multicolumn{3}{c|}{\textsf{SFDM2}} \\ \cline{4-14} 
& & & diversity & diversity & time(s) & diversity & time(s) & diversity & time(s) & \#elem & diversity & time(s) & \#elem \\ \hline
\multirow{3}{*}{Adult} & Sex & 2 & \multirow{3}{*}{5.0226} & 4.1485 & 9.583 & 3.1190 & 7.316 & 3.9427 & \textbf{0.040} & 90.2 & \textbf{4.1710} & 0.133 & 120.4 \\ \cline{2-3} \cline{5-14} 
& Race & 5 & & - & - & 1.3702 & 7.951 & - & - & - & \textbf{3.1373} & \textbf{1.435} & 312.3 \\ \cline{2-3} \cline{5-14} 
& Sex+Race & 10 & & - & - & 1.0049 & 8.732 & - & - & - & \textbf{2.9182} & \textbf{4.454} & 620.6 \\ \hline
\multirow{3}{*}{CelebA} & Sex & 2 & \multirow{3}{*}{13.0} & \textbf{11.4} & 34.892 & 8.4 & 23.257 & 9.8 & \textbf{0.018} & 87.2 & 10.9 & 0.039 & 122.3 \\ \cline{2-3} \cline{5-14} 
& Age & 2 & & \textbf{11.4} & 36.606 & 7.2 & 26.660 & 10.4 & \textbf{0.025} & 94.6 & 10.8 & 0.0672 & 128.0 \\ \cline{2-3} \cline{5-14} 
& Sex+Age & 4 & & - & - & 6.3 & 23.950 & - & - & - & \textbf{10.4} & \textbf{0.107} & 193.1 \\ \hline
\multirow{3}{*}{Census} & Sex & 2 & \multirow{3}{*}{35.0} & 27.0 & 355.315 & 17.5 & 246.518 & 27.0 & \textbf{0.032} & 121.5 & \textbf{31.0} & 0.089 & 163.0 \\ \cline{2-3} \cline{5-14}
& Age & 7 & & - & - & 8.5 & 297.923 & - & - & - & \textbf{21.0} & \textbf{0.797} & 676.0 \\ \cline{2-3} \cline{5-14} 
& Sex+Age & 14 & & - & - & 5.0 & 415.363 & - & - & - & \textbf{19.0} & \textbf{4.193} & 1276.0 \\ \hline
Lyrics & Genre & 15 & 1.5476 & - & - & 0.2228 & 18.239 & - & - & - & \textbf{1.4528} & \textbf{3.224} & 675.4 \\ \hline
\end{tabular}
\end{table*}

\subsection{Experimental Setup}
\label{subsec:setup}

\noindent\textbf{Datasets:}
We perform our experiments on four publicly available real-world datasets as follows:
\begin{itemize}[leftmargin=*]
  \item \textbf{Adult}\footnote{\url{https://archive.ics.uci.edu/ml/datasets/adult}} is a collection of $48,842$ records extracted from the 1994 US Census database. We select $6$ numeric attributes as features and normalize each of them to have zero mean and unit standard deviation. The Euclidean distance is used as the distance metric. The groups are generated from two demographic attributes: \emph{sex} and \emph{race}. By using them individually and in combination, there are $2$ (\emph{sex}), $5$ (\emph{race}), and $10$ (\emph{sex+race}) groups, respectively.
  \item \textbf{CelebA}\footnote{\url{https://mmlab.ie.cuhk.edu.hk/projects/CelebA.html}} is a set of $202,599$ images of human faces. We use $41$ pre-trained class labels as features and the Manhattan distance as the distance metric. We generate $2$ groups from \emph{sex} \{`female', `male'\}, $2$ groups from \emph{age} \{`young', `not young'\}, and $4$ groups from both of them, respectively.
  \item \textbf{Census}\footnote{\url{https://archive.ics.uci.edu/ml/datasets/US+Census+Data+(1990)}} is a set of $2,426,116$ records obtained from the 1990 US Census data. We take $25$ (normalized) numeric attributes as features and use the Manhattan distance as the distance metric. We generate $2$, $7$, and $14$ groups from \emph{sex}, \emph{age}, and both of them, respectively.
  \item \textbf{Lyrics}\footnote{\url{http://millionsongdataset.com/musixmatch}} is a collection of $122,448$ documents, each of which is the lyrics of a song. We train a topic model with $50$ topics using LDA~\cite{DBLP:journals/jmlr/BleiNJ03} implemented in Gensim\footnote{\url{https://radimrehurek.com/gensim}}. Each document is represented as a $50$-dimensional feature vector and the angular distance is used as the distance metric. We generate $15$ groups based on the primary genres of songs.
\end{itemize}
We generate different synthetic datasets with varying $n$ and $m$ for scalability tests. In each synthetic dataset, we generate ten $2$-dimensional Gaussian isotropic blobs with random centers in $[-10,10]^2$ and identity covariance matrices. We assign points to groups uniformly at random. The Euclidean distance is used as the distance metric. The total number $n$ of points varies from $10^3$ to $10^7$ with fixed $m=2$ or $10$. And the number $m$ of groups varies from $2$ to $20$ with fixed $n=10^5$. The statistics of all datasets are summarized in Table~\ref{tbl:stat}.

\begin{figure*}[!t]
  \centering
  \subfigure[Adult (Sex, $m=2$)]{
    \includegraphics[width=0.155\textwidth]{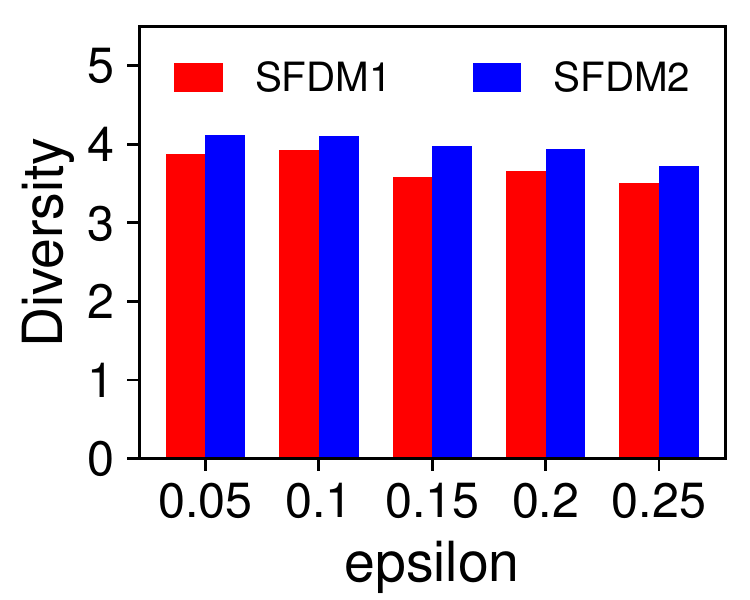}
    \includegraphics[width=0.155\textwidth]{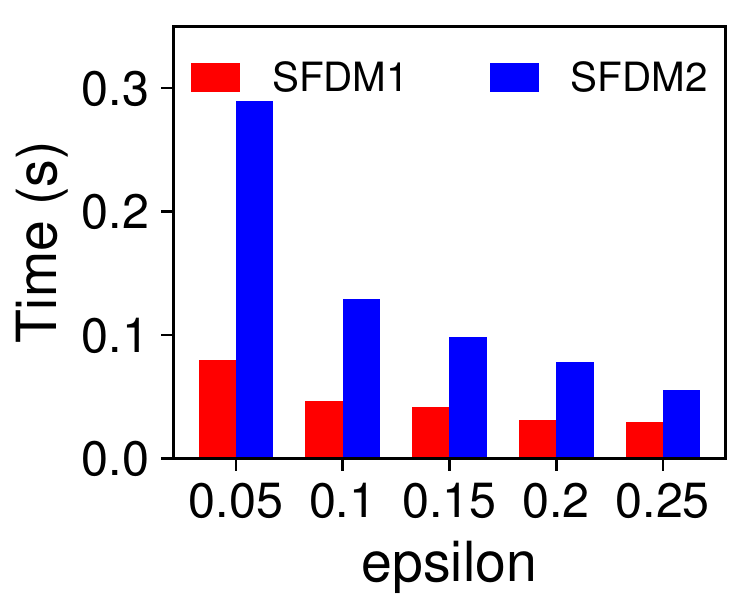}
    \includegraphics[width=0.155\textwidth]{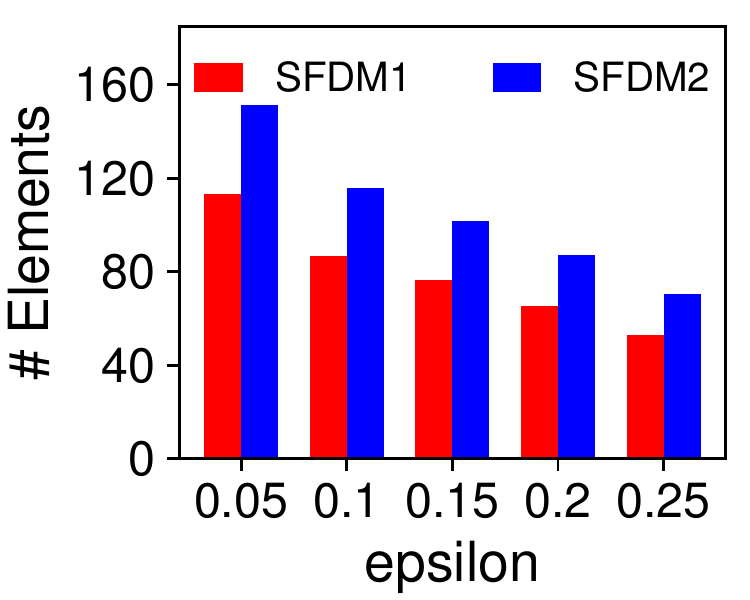}
  }
  \hfill
  \subfigure[CelebA (Sex, $m=2$)]{
    \includegraphics[width=0.155\textwidth]{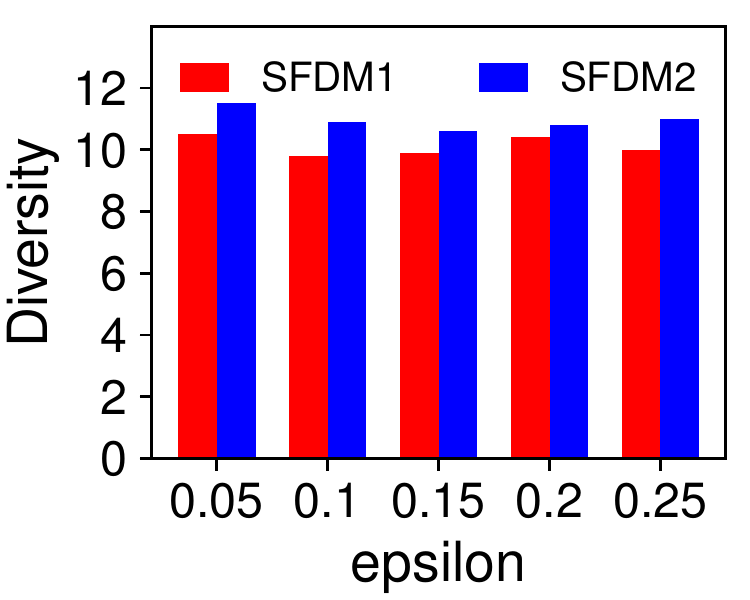}
    \includegraphics[width=0.155\textwidth]{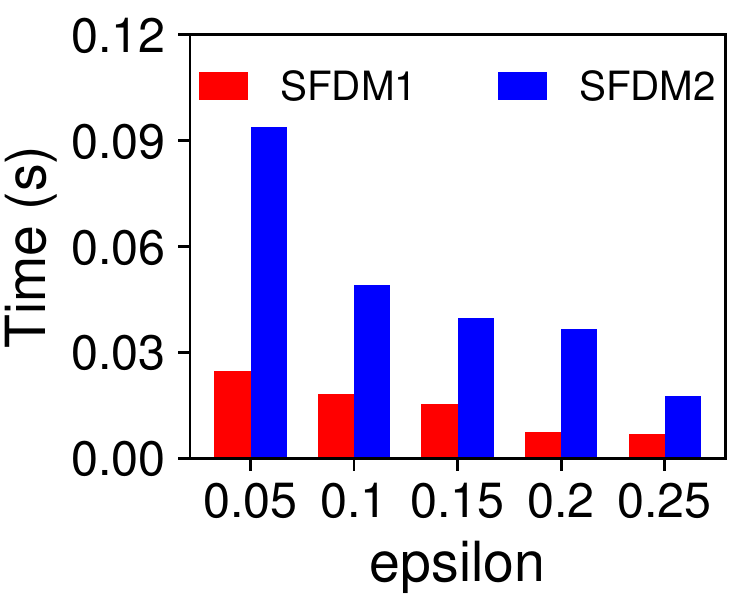}
    \includegraphics[width=0.155\textwidth]{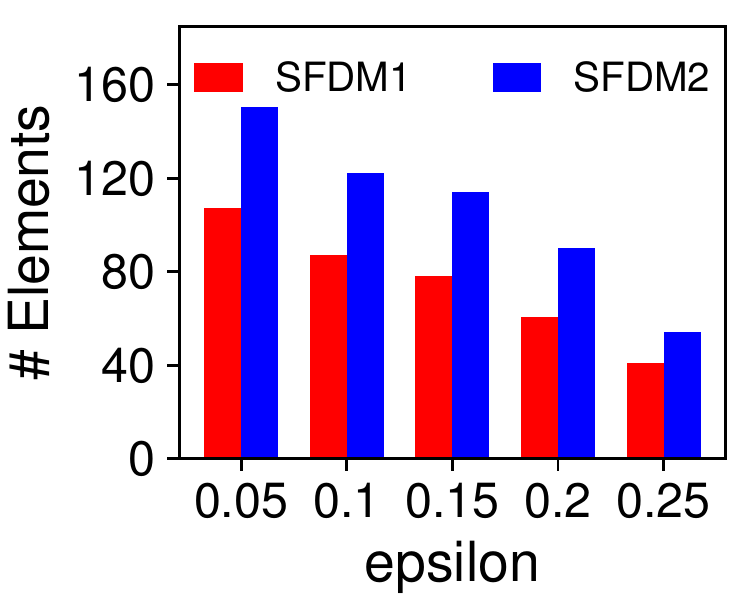}
  }
  \subfigure[Census (Sex, $m=2$)]{
    \includegraphics[width=0.155\textwidth]{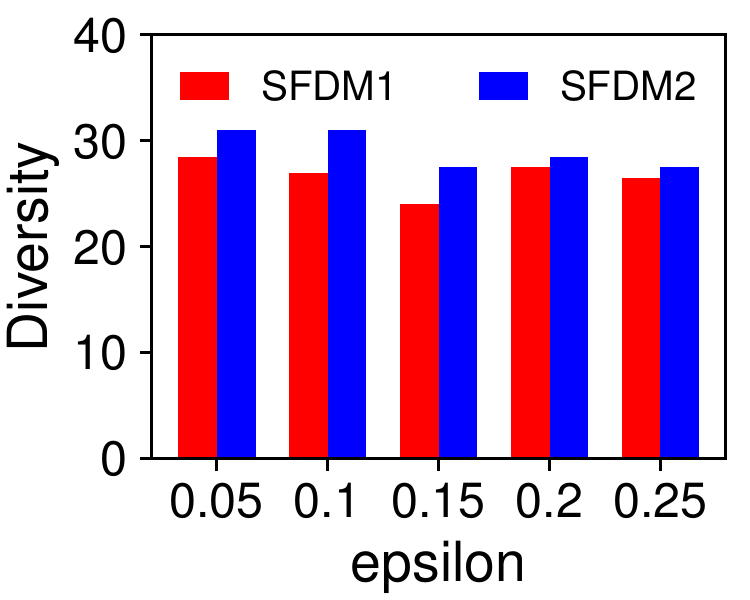}
    \includegraphics[width=0.155\textwidth]{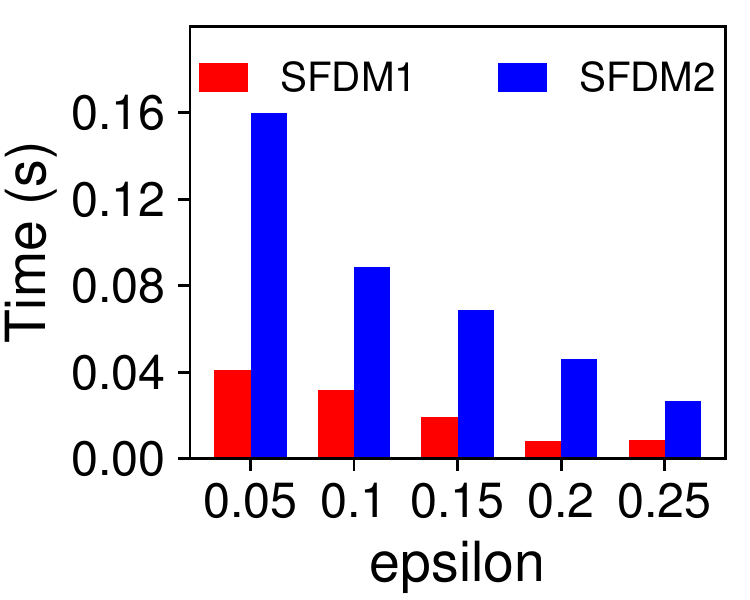}
    \includegraphics[width=0.155\textwidth]{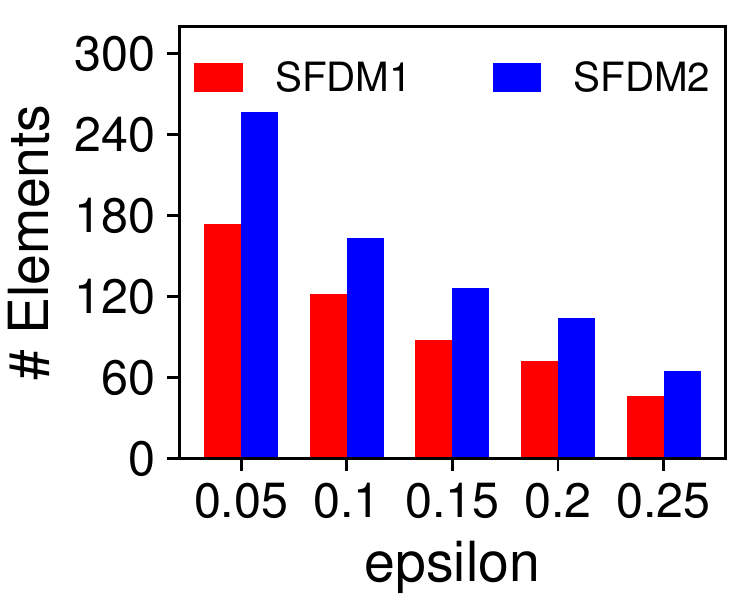}
  }
  \hfill
  \subfigure[Lyrics (Genre, $m=15$)]{
    \includegraphics[width=0.155\textwidth]{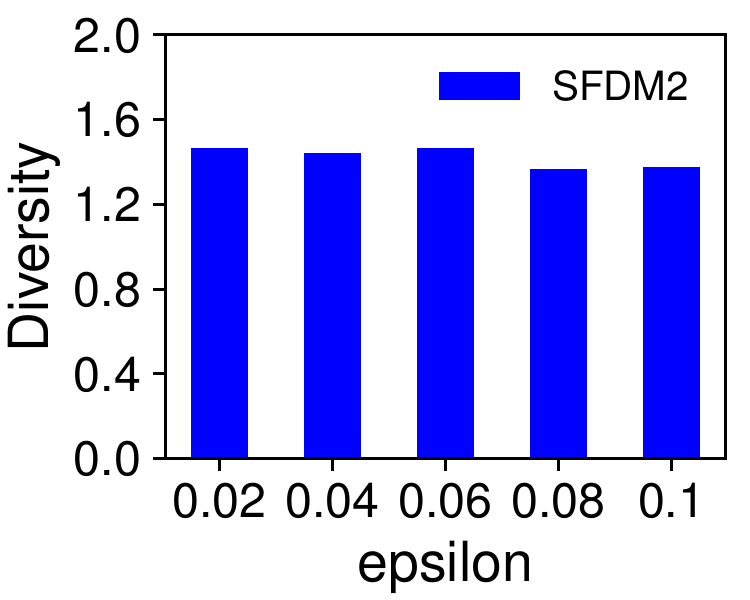}
    \includegraphics[width=0.155\textwidth]{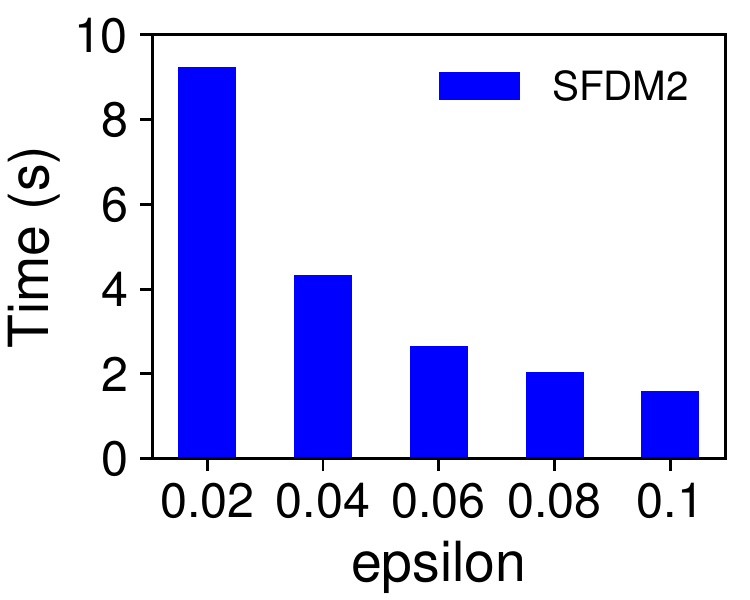}
    \includegraphics[width=0.155\textwidth]{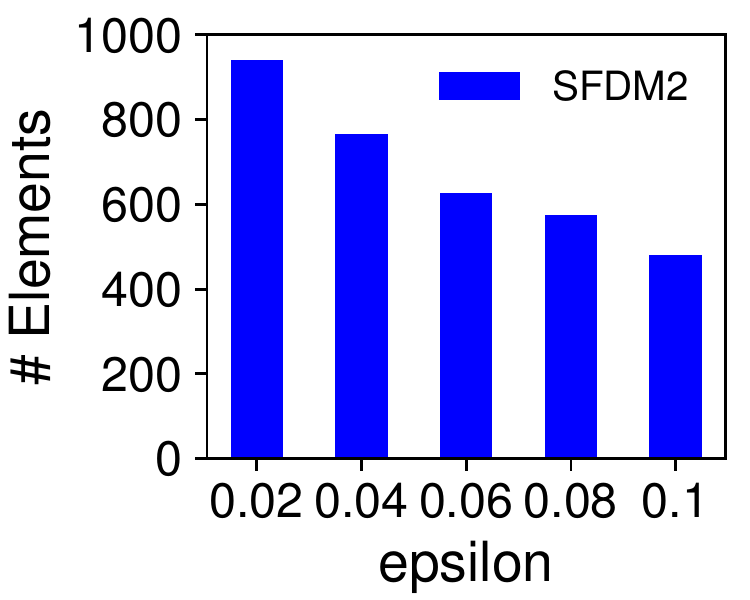}
  }
  \caption{Performance of \textsf{SFDM1} and \textsf{SFDM2} with varying parameter $\varepsilon$ ($k=20$).}
  \label{fig:exp:eps}
\end{figure*}

\vspace{1mm}
\noindent\textbf{Algorithms:}
We compare our streaming algorithms -- i.e., \textsf{SFDM1} and \textsf{SFDM2}, with three offline algorithms for FDM in~\cite{moumoulidou_et_al:LIPIcs.ICDT.2021.13}: the $\frac{1}{3m-1}$-approximation \textsf{FairFlow} algorithm for an arbitrary $m$, the $\frac{1}{5}$-approximation \textsf{FairGMM} algorithm for small $k$ and $m$, and the $\frac{1}{4}$-approximation \textsf{FairSwap} algorithm for $m=2$. Since no implementation for the algorithms in~\cite{moumoulidou_et_al:LIPIcs.ICDT.2021.13} is available, they are implemented by ourselves following the description of the original paper. We implement all the algorithms in Python 3.8. Our code is published on GitHub\footnote{\url{https://github.com/yhwang1990/code-FDM}}. All the experiments are run on a server with an Intel Broadwell 2.40GHz CPU and 29GB memory running Ubuntu 16.04.

For each experiment, \textsf{SFDM1} and \textsf{SFDM2} are invoked with parameter $\varepsilon = 0.1$ ($\varepsilon = 0.05$ for \emph{Lyrics}) by default. For a given size constraint $k$, the group-specific size constraint $k_i$ for each group $i \in [m]$ is set based on \emph{equal representation}, which has been widely used in the literature~\cite{DBLP:conf/www/0001FM21,DBLP:conf/icml/KleindessnerAM19,DBLP:conf/icml/ChiplunkarKR20,DBLP:conf/icml/JonesNN20}: If $k$ is divisible by $m$, $k_i = \frac{k}{m}$ for each $i \in [m]$; If $k$ is not divisible by $m$, $k_i = \lceil \frac{k}{m} \rceil$ for some groups or $k_i = \lfloor \frac{k}{m} \rfloor$ for the others with $\sum_{i=1}^{m}=k$. We also compare the performance of different algorithms for \emph{proportional representation}~\cite{DBLP:conf/icml/CelisKS0KV18,NEURIPS2020_9d752cb0,DBLP:conf/www/0001FM21}, another popular notion of fairness that requires the proportion of elements from each group in the solution generally preserves that in the original dataset.

\vspace{1mm}
\noindent\textbf{Performance Measures:}
The performance of each algorithm is evaluated in terms of \emph{efficiency}, \emph{quality}, and \emph{space usage}. The efficiency is measured as \emph{average update time} -- i.e., the average wall-clock time used to compute a solution for each arrival element in the stream. The quality is measured by the value of the \emph{diversity} function of the solution returned by an algorithm. Since computing the optimal diversity $\mathtt{OPT}_f$ of FDM is infeasible, we run the \textsf{GMM} algorithm~\cite{DBLP:journals/tcs/Gonzalez85} for unconstrained diversity maximization to estimate an upper bound of $\mathtt{OPT}_f$ for comparison. The space usage is measured by the number of distinct elements stored by each algorithm. Only the space usages of \textsf{SFDM1} and \textsf{SFDM2} are reported because the offline algorithms keep all elements in memory for random access and thus their space usages are always equal to the dataset size. We run each experiment $10$ times with different permutations of the same dataset and report the average of each measure over $10$ runs for evaluation.

\begin{figure*}
  \centering
  \includegraphics[width=0.6\textwidth]{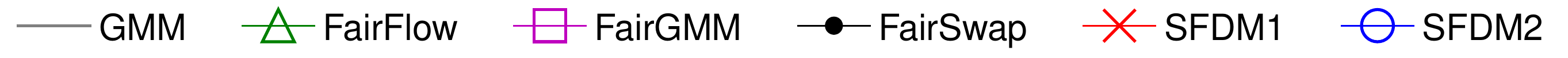}
  \\
  \subfigure[Adult (Sex, $m=2$)]{
    \includegraphics[width=0.2\textwidth]{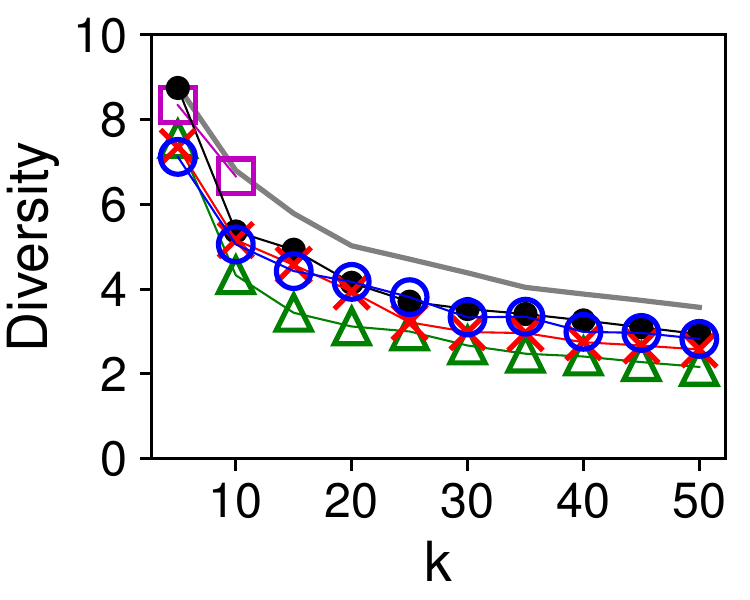}
  }
  \hspace{1em}
  \subfigure[CelebA (Age, $m=2$)]{
    \includegraphics[width=0.2\textwidth]{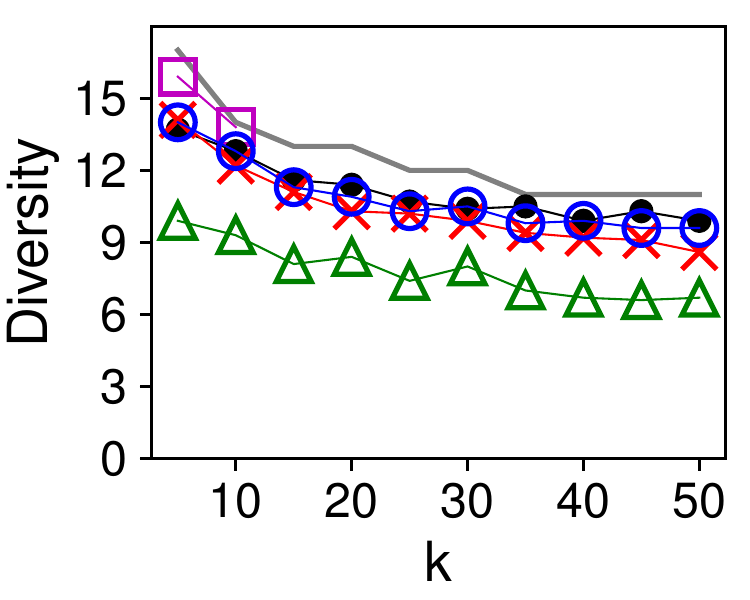}
  }
  \hspace{1em}
  \subfigure[CelebA (Sex, $m=2$)]{
    \includegraphics[width=0.2\textwidth]{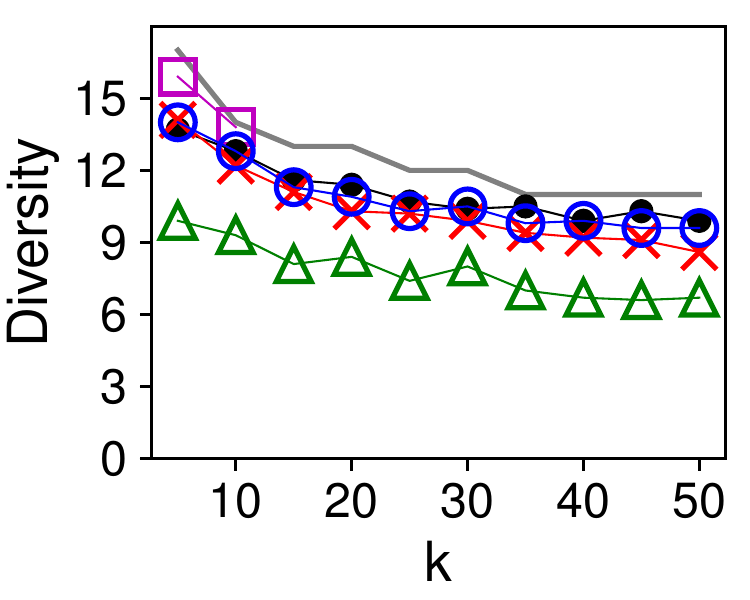}
  }
  \hspace{1em}
  \subfigure[Census (Sex, $m=2$)]{
    \includegraphics[width=0.2\textwidth]{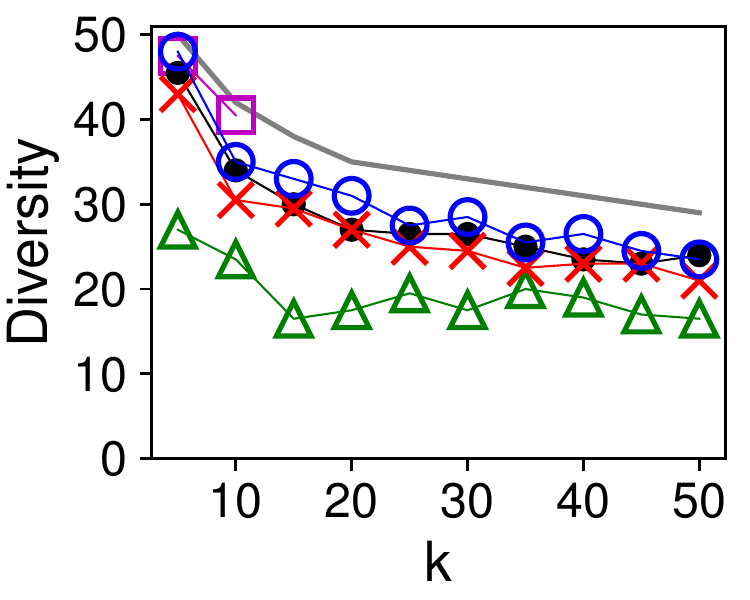}
  }
  \\
  \subfigure[Adult (Race, $m=5$)]{
    \includegraphics[width=0.2\textwidth]{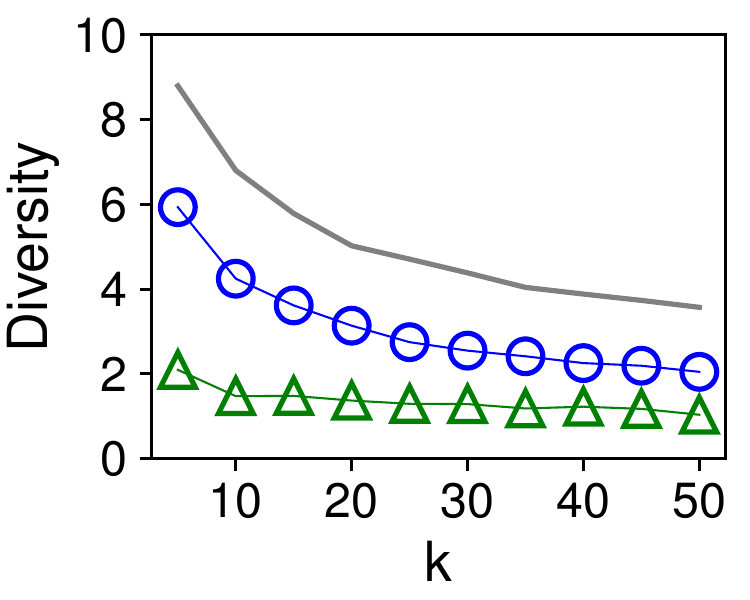}
  }
  \hspace{1em}
  \subfigure[CelebA (Sex+Age, $m=4$)]{
    \includegraphics[width=0.2\textwidth]{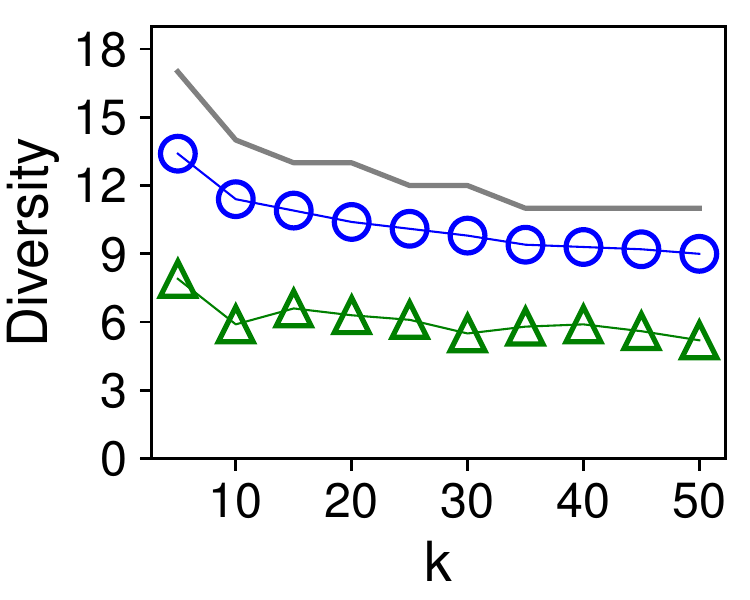}
  }
  \hspace{1em}
  \subfigure[Census (Age, $m=7$)]{
    \includegraphics[width=0.2\textwidth]{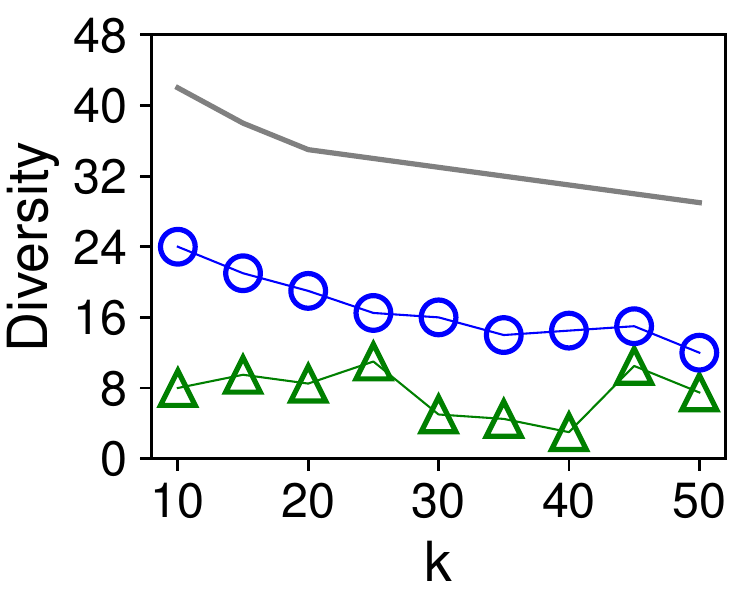}
  }
  \hspace{1em}
  \subfigure[Lyrics (Genre, $m=15$)]{
    \includegraphics[width=0.2\textwidth]{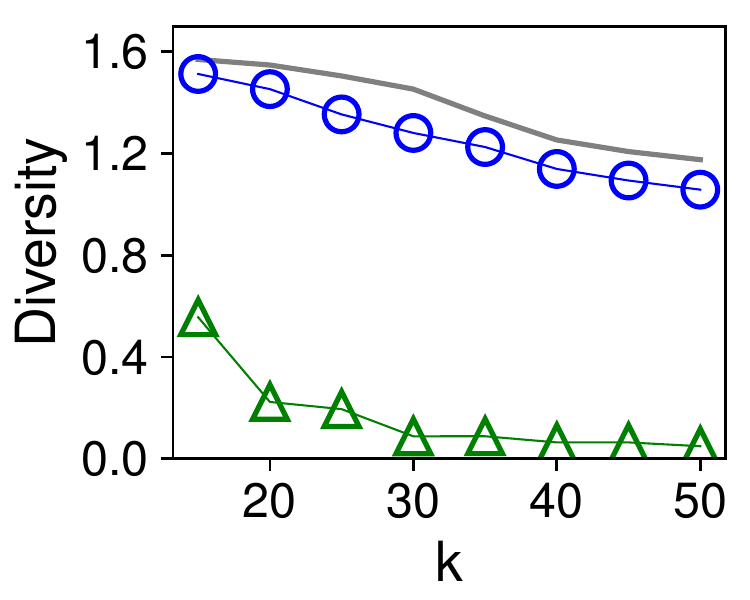}
  }
  \caption{Solution quality with varying $k$. The diversity values of \textsf{GMM} are plotted as gray lines to illustrate the losses caused by fairness constraints.}
  \label{fig:exp:k:m2:div}
\end{figure*}

\begin{figure*}
  \centering
  \includegraphics[width=0.5\textwidth]{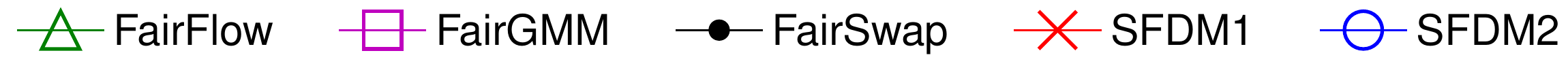}
  \\
  \subfigure[Adult (Sex, $m=2$)]{
    \includegraphics[width=0.2\textwidth]{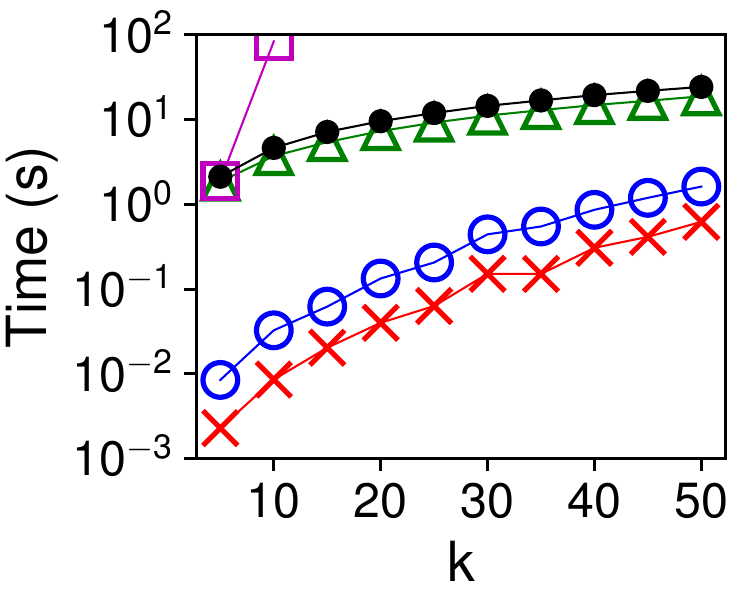}
  }
  \hspace{1em}
  \subfigure[CelebA (Age, $m=2$)]{
    \includegraphics[width=0.2\textwidth]{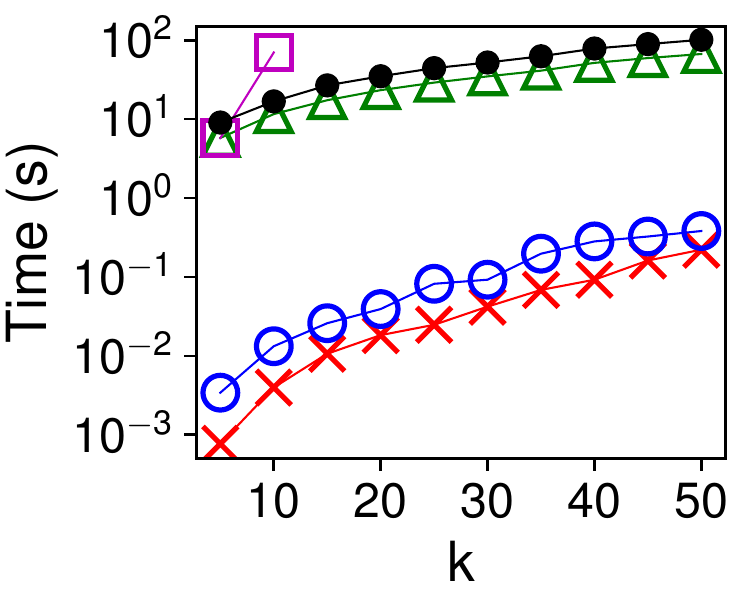}
  }
  \hspace{1em}
  \subfigure[CelebA (Sex, $m=2$)]{
    \includegraphics[width=0.2\textwidth]{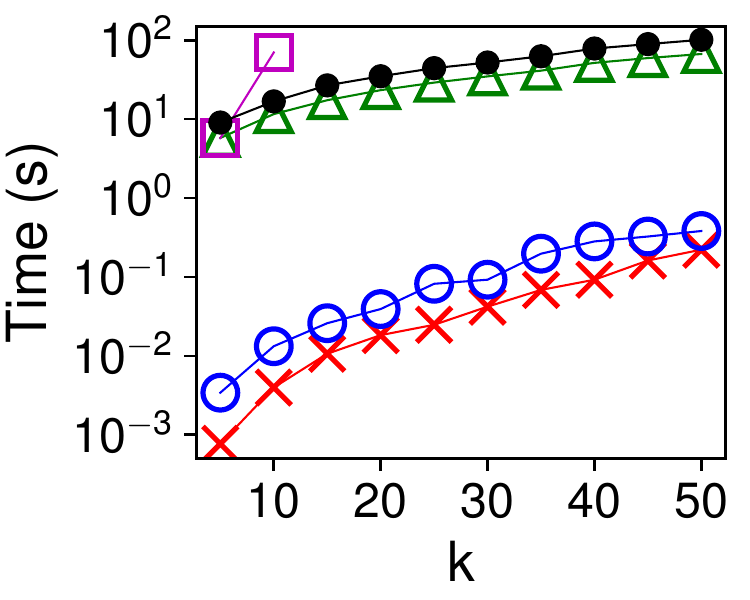}
  }
  \hspace{1em}
  \subfigure[Census (Sex, $m=2$)]{
    \includegraphics[width=0.2\textwidth]{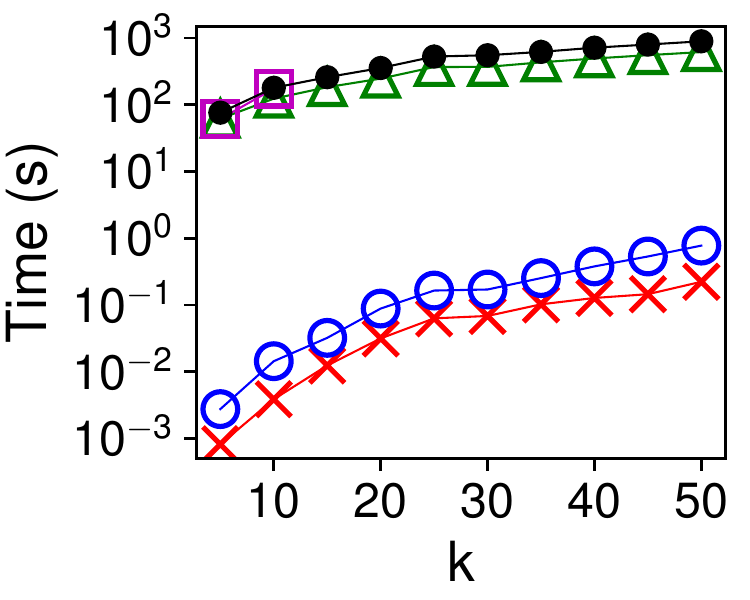}
  }
  \\
  \subfigure[Adult (Race, $m=5$)]{
    \includegraphics[width=0.2\textwidth]{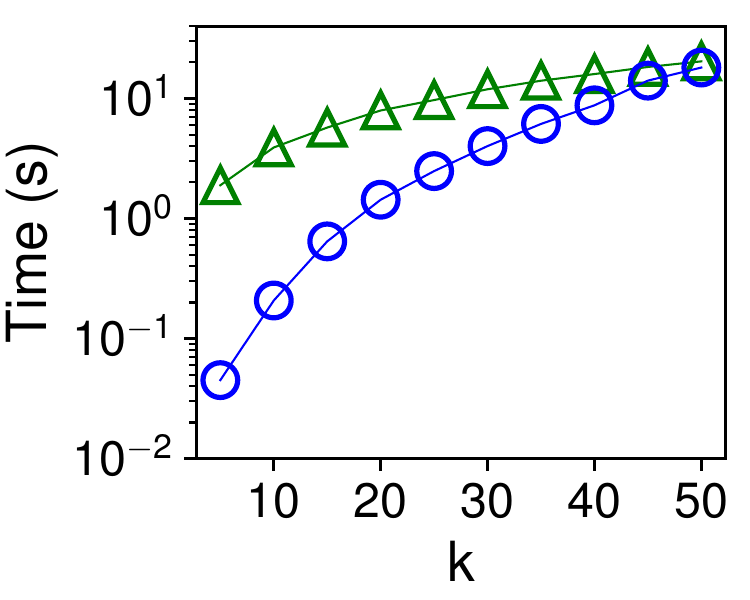}
  }
  \hspace{1em}
  \subfigure[CelebA (Sex+Age, $m=4$)]{
    \includegraphics[width=0.2\textwidth]{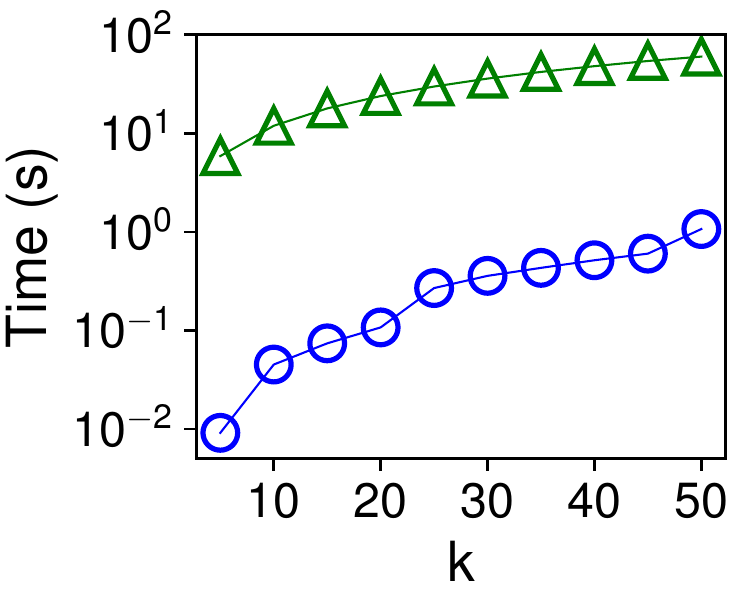}
  }
  \hspace{1em}
  \subfigure[Census (Age, $m=7$)]{
    \includegraphics[width=0.2\textwidth]{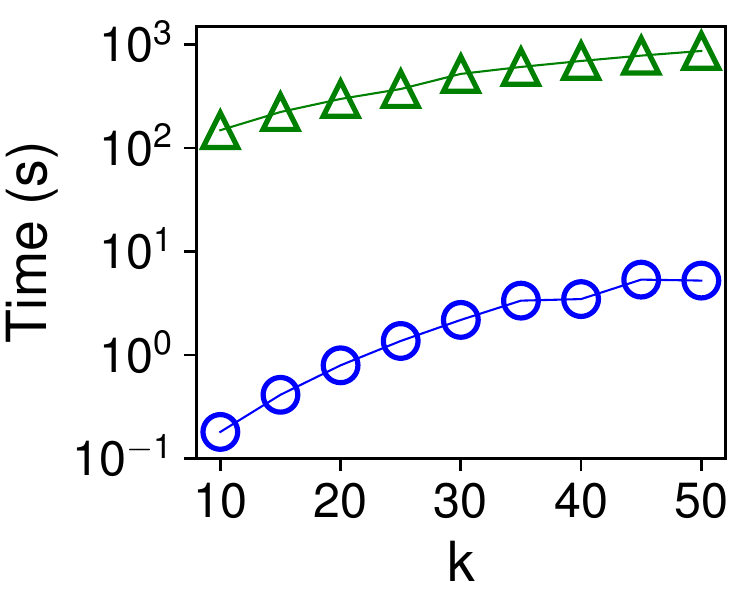}
  }
  \hspace{1em}
  \subfigure[Lyrics (Genre, $m=15$)]{
    \includegraphics[width=0.2\textwidth]{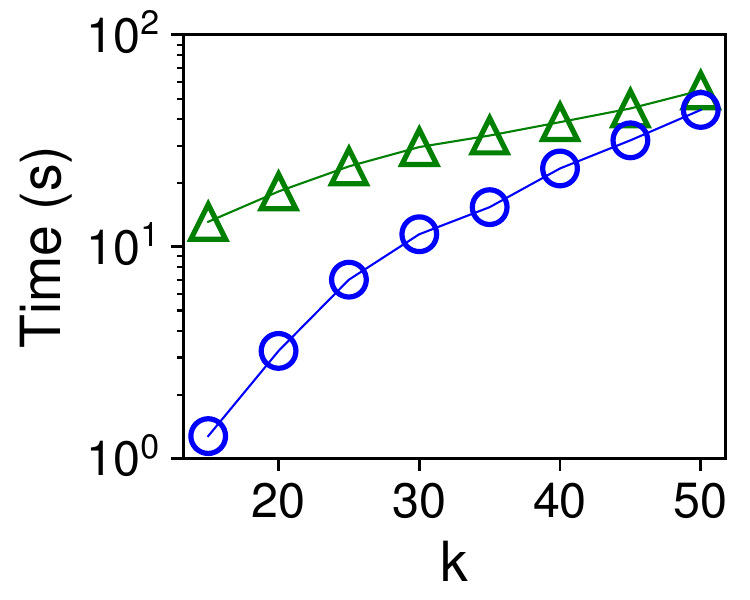}
  }
  \caption{Efficiency with varying $k$.}
  \label{fig:exp:k:m2:time}
\end{figure*}

\subsection{Experimental Results}
\label{subsec:results}

\noindent\textbf{Overview:}
Table~\ref{tbl:exp:overview} shows the performance of different algorithms for FDM on four real-world datasets with different group settings when the solution size $k$ is fixed to $20$. Note that \textsf{FairGMM} is not included in Table~\ref{tbl:exp:overview} because it needs to enumerate at most $\binom{km}{k}=O(m^k)$ candidates for solution computation and cannot scale to $k>10$ and $m>5$. First of all, compared with the unconstrained solution returned by \textsf{GMM}, all the fair solutions are less diverse because of additional fairness constraints. Since \textsf{GMM} is a $\frac{1}{2}$-approximation algorithm and $\mathtt{OPT} \geq \mathtt{OPT}_f$, $2 \cdot div\_\mathsf{GMM}$ can be seen as an upper bound of $\mathtt{OPT}_f$, from which we can find that all four algorithms return solutions of much better approximations than the lower bounds.

In case of $m=2$, \textsf{SFDM1} runs the fastest among all four algorithms, which achieves two to four orders of magnitude speedups over \textsf{FairSwap} and \textsf{FairFlow}. Meanwhile, its solution quality is close or equal to that of \textsf{FairSwap} in most cases. \textsf{SFDM2} shows lower efficiency than \textsf{SFDM1} due to higher cost of post-processing. But it is still much more efficient than offline algorithms by taking the advantage of stream processing. In addition, the solution quality of \textsf{SFDM2} benefits from the greedy selection procedure in Algorithm~\ref{alg:matroid}, which is not only consistently better than that of \textsf{SFDM1} but also better than that of \textsf{FairSwap} on \emph{Adult} and \emph{Census}.

In case of $m>2$, \textsf{SFDM1} and \textsf{FairSwap} are not applicable any more and thus ignored in Table~\ref{tbl:exp:overview}. \textsf{SFDM2} shows significant advantages over \textsf{FairFlow} in terms of both solution quality and efficiency. It provides solutions of up to $6.3$ times more diverse than \textsf{FairFlow} while running several orders of magnitude faster.

In terms of space usage, both \textsf{SFDM1} and \textsf{SFDM2} store very small portions of elements (less than $0.1\%$ on \emph{Census}) on all datasets. \textsf{SFDM2} keeps slightly more elements than \textsf{SFDM1} because the capacity of each group-specific candidate for group $i$ is $k$ instead of $k_i$. For \textsf{SFDM2}, the number of stored elements increases near linearly with $m$, since the total number of candidates is linear to $m$.

\noindent\textbf{Effect of Parameter $\varepsilon$:}
Fig.~\ref{fig:exp:eps} illustrates the performance of \textsf{SFDM1} and \textsf{SFDM2} with different values of $\varepsilon$ when $k$ is fixed to $20$. We range the value of $\varepsilon$ from $0.05$ to $0.25$ on \emph{Adult}, \emph{CelebA}, and \emph{Census} and from $0.02$ to $0.1$ on \emph{Lyrics}. Since the angular distances between two vectors in \emph{Lyrics} are at most $\frac{\pi}{2}$, too large values of $\varepsilon$ will lead to great estimation errors for $\mathtt{OPT}_f$. Generally, \textsf{SFDM1} has higher efficiency and smaller space usage than \textsf{SFDM2} for different values of $\varepsilon$, but \textsf{SFDM2} exhibits better solution quality. Furthermore, the running time and numbers of stored elements of both algorithms significantly decrease when the value of $\varepsilon$ increases. This is consistent with our analyses in Section~\ref{sec:alg} because the number of guesses for $\mathtt{OPT}_f$ and thus the number of candidates maintained by both algorithms are $O(\frac{\log \Delta}{\varepsilon})$. A slightly surprising result is that the diversity values of the solutions do not degrade obviously even when $\varepsilon=0.25$. This can be explained by the fact that both algorithms return the best solutions after post-processing among all candidates, which means that they can provide good solutions as long as there is some $\mu \in \mathcal{U}$ close to $\mathtt{OPT}_f$. We infer that such $\mu$ still exists when $\varepsilon=0.25$. Nevertheless, we note that the chance of finding an appropriate value of $\mu$ will be smaller when the value of $\varepsilon$ is larger, which will lead to less stable solution quality. In the remaining experiments, we always use $\varepsilon=0.1$ for both algorithms on all datasets except \emph{Lyrics}, where the value of $\varepsilon$ is set to $0.05$.

\begin{figure}
  \centering
  \subfigure[Adult]{
    \includegraphics[width=0.22\textwidth]{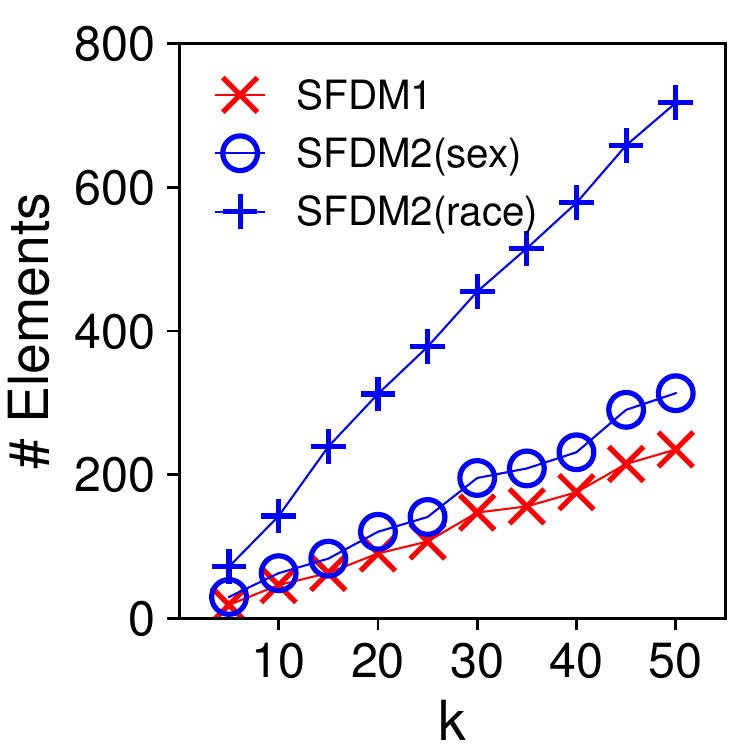}
  }
  \subfigure[Census]{
    \includegraphics[width=0.22\textwidth]{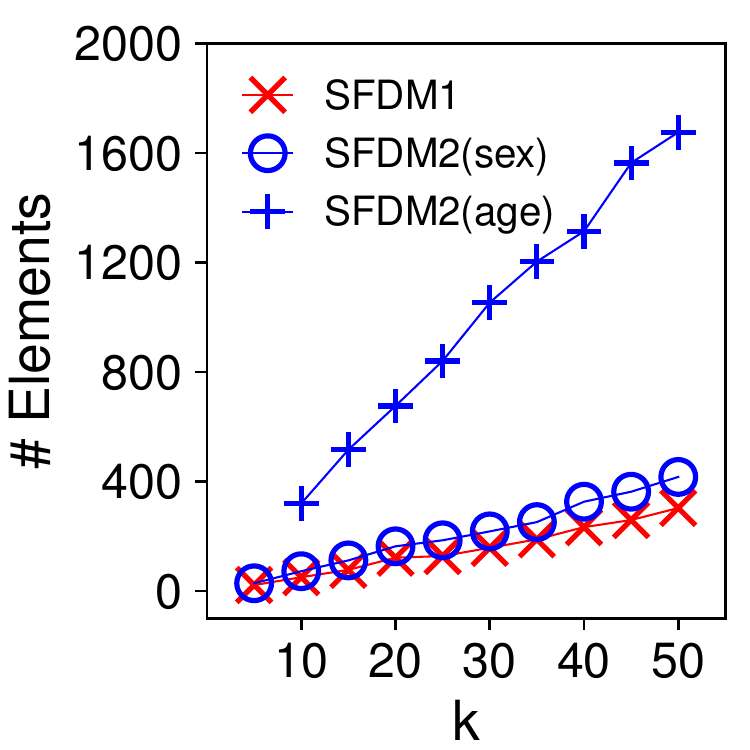}
  }
  \caption{Number of stored elements with varying $k$.}
  \label{fig:exp:k:elem}
\end{figure}

\vspace{1mm}
\noindent\textbf{Effect of Solution Size $k$:}
The impact of $k$ on the performance of different algorithms is illustrated in Fig.~\ref{fig:exp:k:m2:div}--\ref{fig:exp:k:elem}. Here we vary $k$ in $[5,50]$ when $m \leq 5$, or $[10,50]$ when $5 < m \leq 10$, or $[15,50]$ when $m > 10$, since we restrict that an algorithm must pick at least one element from each group. In general, for each algorithm, the diversity value drops with $k$ as the diversity function is monotonically non-increasing while the running time grows with $k$ as their time complexities are linear or quadratic w.r.t.~$k$. Compared with the solutions of \textsf{GMM}, all fair solutions are slightly less diverse when $m=2$. The gaps in diversity values become more obvious when $m$ is larger. Although \textsf{FairGMM} achieves slightly higher solution quality than other algorithms when $k \leq 10$ and $m=2$, it is not scalable to larger $k$ and $m$ due to the huge cost of enumeration. The solution quality of \textsf{FairSwap}, \textsf{SFDM1}, and \textsf{SFDM2} is close to each other when $m=2$, which is better than that of \textsf{FairFlow}. But the efficiencies of \textsf{SFDM1} and \textsf{SFDM2} are orders of magnitude higher than those of \textsf{FairSwap} and \textsf{FairFlow} when $m=2$. Furthermore, when $m > 2$, \textsf{SFDM2} outperforms \textsf{FairFlow} in terms of both efficiency and effectiveness across all $k$ values. However, since the time complexity of \textsf{SFDM2} is quadratic w.r.t.~both $k$ and $m$, its running time increases drastically with $k$ when $m$ is large. Finally, in terms of space usage, the numbers of elements maintained by \textsf{SFDM1} and \textsf{SFDM2} both increase linearly with $k$. In addition, it is also linearly correlated with $m$ for \textsf{SFDM2}. In all experiments, both algorithms only store small portions of elements in the dataset.

\begin{figure}
  \centering
  \subfigure[Adult (Sex, $k=20$)]{
    \includegraphics[width=0.215\textwidth]{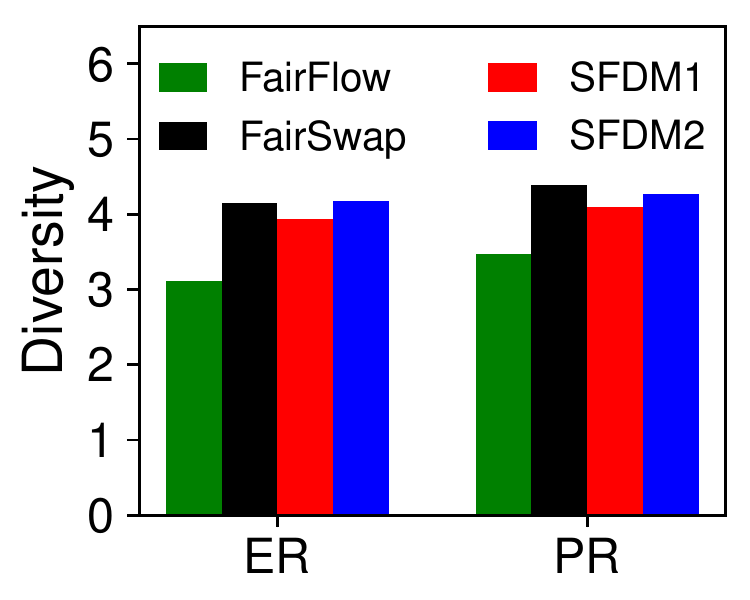}
    \hspace{0.5em}
    \includegraphics[width=0.215\textwidth]{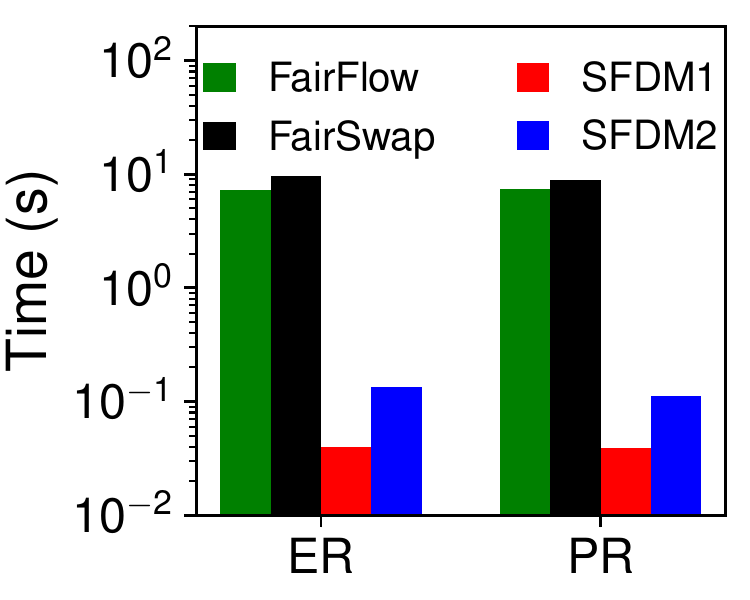}
  }
  \subfigure[Adult (Race, $k=20$)]{
    \includegraphics[width=0.215\textwidth]{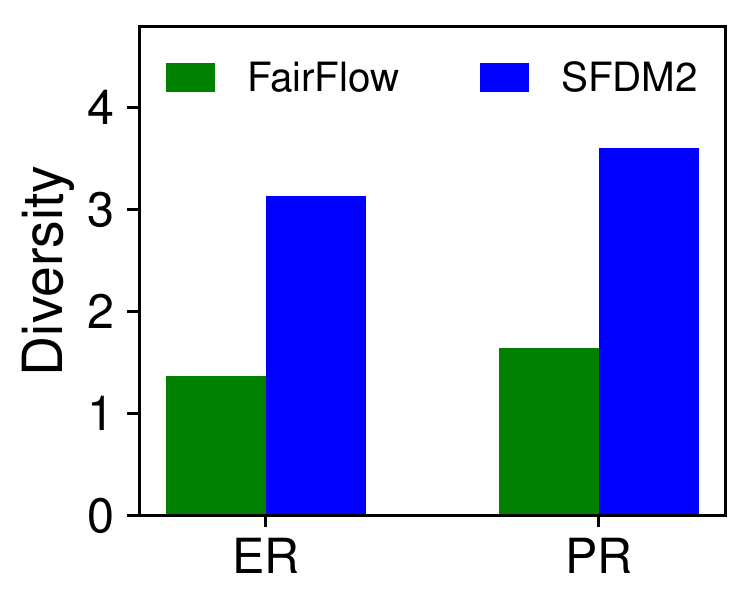}
    \hspace{0.5em}
    \includegraphics[width=0.215\textwidth]{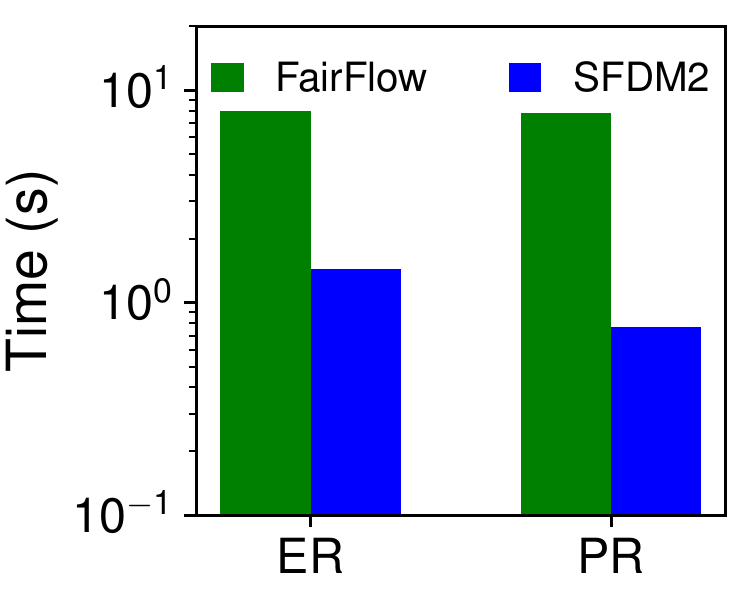}
  }
  \caption{Comparison of the performance of different algorithms on \emph{Adult} for equal representation (ER) and proportional representation (PR).}
  \label{fig:exp:er:pr}
\end{figure}

\begin{figure*}
  \centering
  \includegraphics[width=0.4\textwidth]{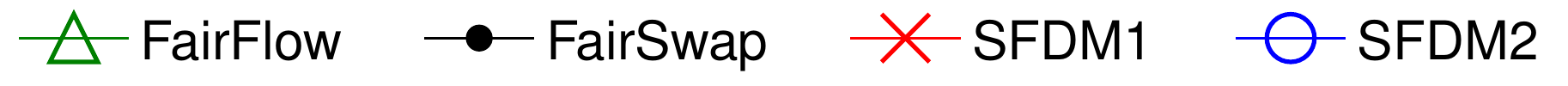}
  \\
  \subfigure[Synthetic ($m=2$)]{
    \includegraphics[width=0.2\textwidth]{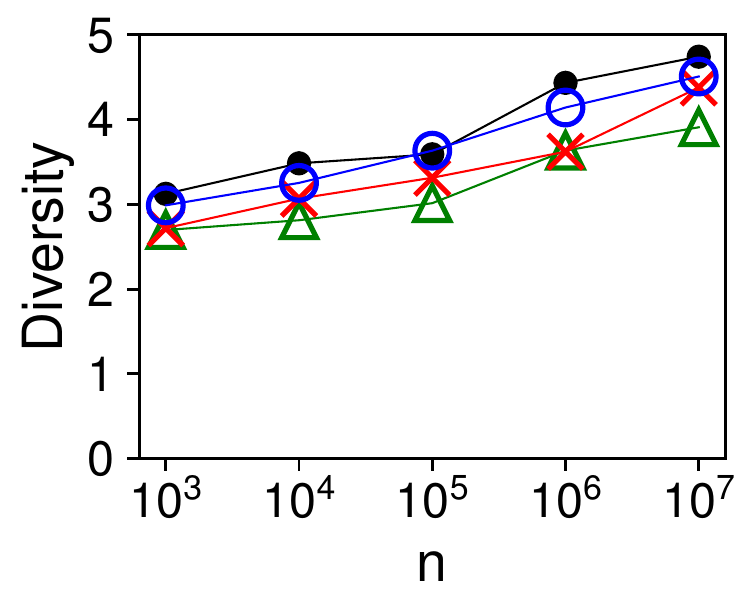}
    \includegraphics[width=0.2\textwidth]{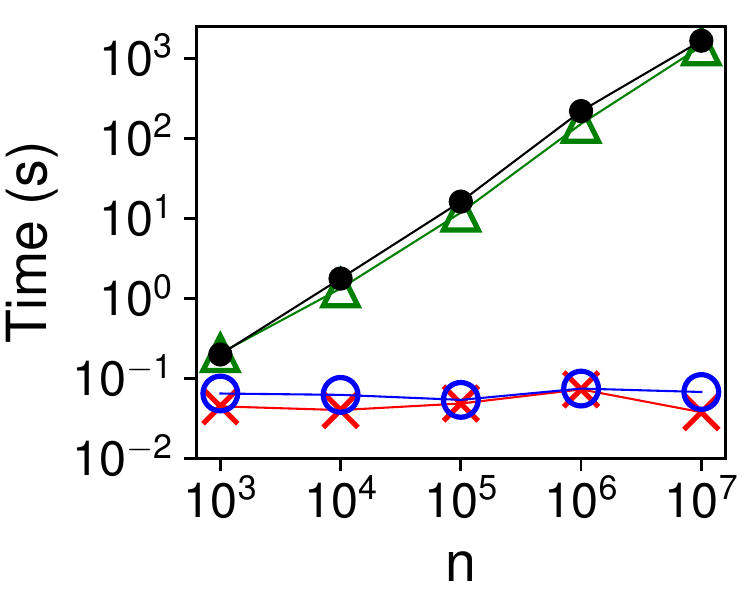}
  }
  \hspace{1em}
  \subfigure[Synthetic ($m=10$)]{
    \includegraphics[width=0.2\textwidth]{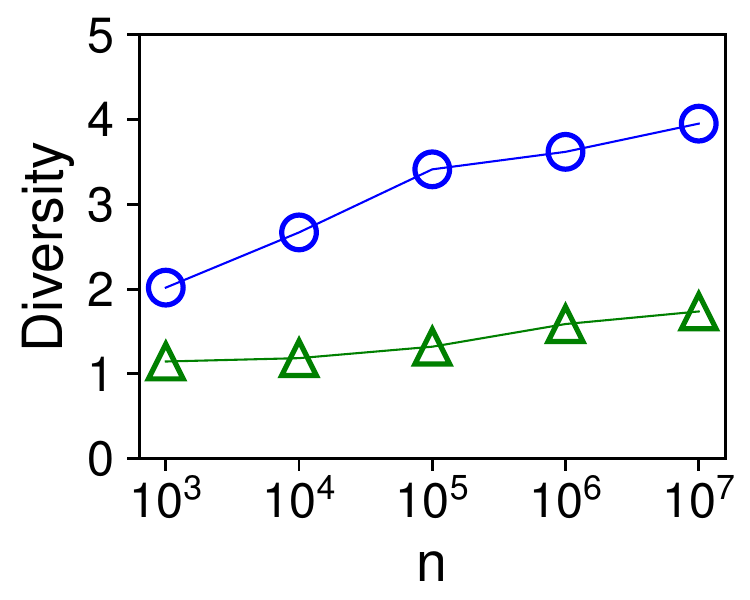}
    \includegraphics[width=0.2\textwidth]{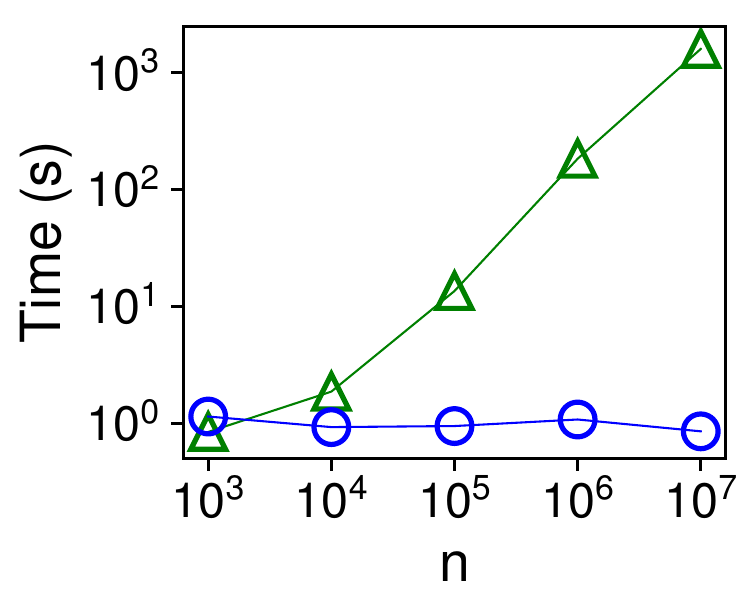}
  }
  \caption{Solution quality and running time with varying $n$ ($k=20$).}
  \label{fig:exp:n}
\end{figure*}

\noindent\textbf{Equal Representation (ER) vs.~Proportional Representation (PR):}
Fig.~\ref{fig:exp:er:pr} compares the solution quality and running time of different algorithms for two popular notions of \emph{fairness} -- i.e., \emph{equal representation} (ER) and \emph{proportional representation} (PR), when $k=20$ on \emph{Adult} with highly skewed groups, where $67\%$ of the records are for males and $87\%$ of the records are for Whites. The diversity value of the solution of each algorithm is slightly higher for PR than ER, as the solution for PR is closer to the unconstrained one. The running time of \textsf{SFDM1} and \textsf{SFDM2} is slightly shorter for PR than ER since fewer swapping and augmentation steps are performed on each candidate in the post-processing.

\begin{figure}
  \centering
  \includegraphics[width=0.4\textwidth]{exp-mn/legend-mn.pdf}
  \\
  \includegraphics[width=0.2\textwidth]{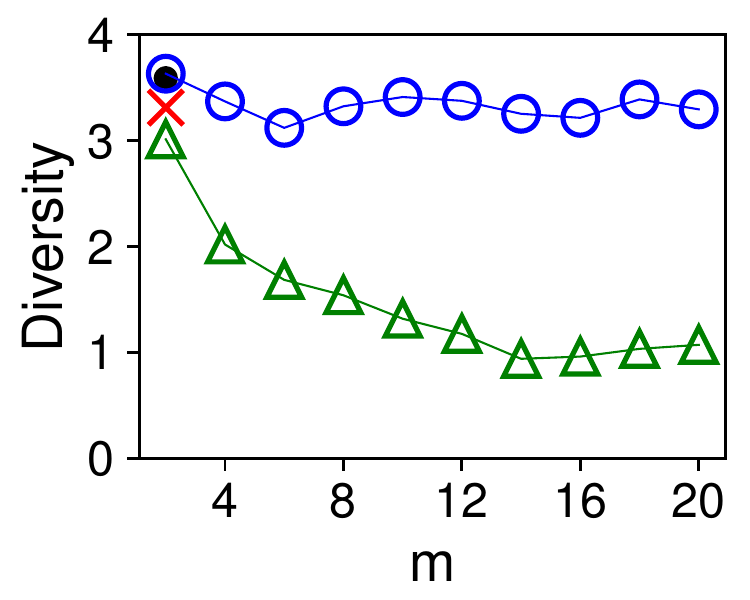}
  \hspace{1em}
  \includegraphics[width=0.2\textwidth]{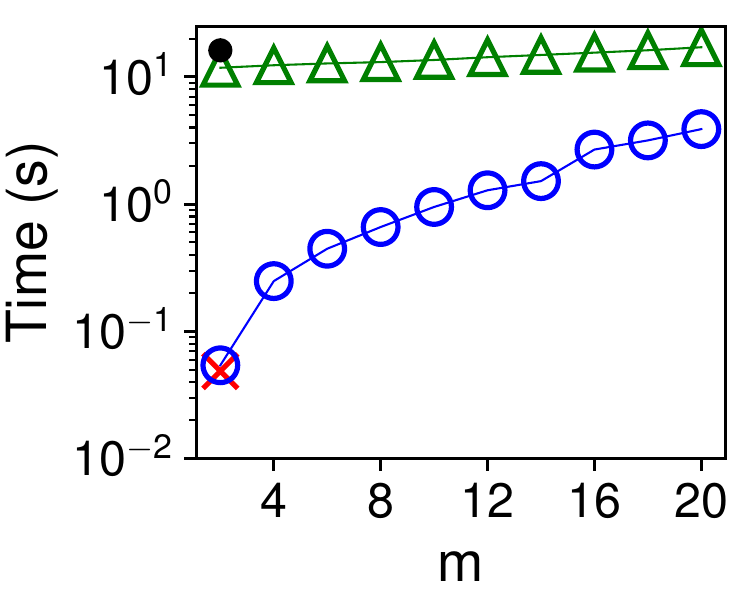}
  \caption{Solution quality and running time with varying $m$ on the synthetic datasets ($n=10^5, k=20$).}
  \label{fig:exp:m}
\end{figure}

\noindent\textbf{Scalability:}
We evaluate the scalability of each algorithm on synthetic datasets with varying the number of groups $m$ from $2$ to $20$ and the dataset size $n$ from $10^3$ to $10^7$. The results on solution quality and running time for different values of $n$ and $m$ when $k=20$ are presented in Fig.~\ref{fig:exp:n} and~\ref{fig:exp:m}, respectively. We omit the results on space usages because they are similar to previous results. First of all, \textsf{SFDM2} shows much better scalability than \textsf{FairFlow} w.r.t.~$m$ in terms of solution quality. The diversity value of the solution \textsf{SFDM2} only slightly deceases with $m$ and is up to $3$ times higher than that of \textsf{FairFlow} when $m>10$. Nevertheless, its running time increases more rapidly with $m$ due to the quadratic dependence on $m$. Furthermore, the diversity values of different algorithms slightly grow with $n$ but are always close to each other for different values of $n$ when $m=2$. When $m=10$, the advantage of \textsf{SFDM2} over \textsf{FairFlow} in solution quality becomes larger with $n$. Finally, the running time (as well as memory usage) of offline algorithms are linear to $n$. But the running time and memory usage of \textsf{SFDM1} and \textsf{SFDM2} are independent of $n$, as analyzed in Section~\ref{sec:alg}.

\section{Conclusion}
\label{sec:conclusion}

In this paper, we studied the diversity maximization problem with fairness constraints in the streaming setting. First of all, we proposed a $\frac{1-\varepsilon}{4}$-approximation streaming algorithm for this problem when there were two groups in the dataset. Furthermore, we designed a $\frac{1-\varepsilon}{3m+2}$-approximation streaming algorithm that could deal with an arbitrary number $m$ of groups in the dataset. Extensive experiments on real-world and synthetic datasets confirmed the efficiency, effectiveness, and scalability of our proposed algorithms.

In future work, we would like to improve the approximation ratios of the proposed algorithms and consider diversity maximization problems with fairness constraints in more general settings, e.g., the sliding-window model and fairness constraints defined on multiple sensitive attributes.

\section*{Acknowledgments}
This research was done when Yanhao Wang worked at the University of Helsinki. Yanhao Wang and Michael Mathioudakis have been supported by the MLDB project of the Academy of Finland (decision number: 322046). Francesco Fabbri is a fellow of Eurecat's ``Vicente L\'{o}pez'' PhD grant program. This work was partially financially supported by the Catalan Government through the funding grant ACCI\'{O}-Eurecat (Project PRIVany-nom).

\balance
\bibliography{references}

\end{document}